\pdfoutput=1
\RequirePackage{ifpdf}
\ifpdf 
\documentclass[pdftex]{sigma}
\else
\documentclass{sigma}
\fi

\usepackage[all]{xy}

\numberwithin{equation}{section}

\newtheorem{Theorem}{Theorem}[section]
\newtheorem{Proposition}[Theorem]{Proposition}
 { \theoremstyle{definition}
\newtheorem{Remark}[Theorem]{Remark} }

\def\C{{\mathbb C}}
\def\P{{\mathbb P}}
\def\Z{{\mathbb Z}}

\def\o#1{{\overline{#1}}}
\def\u#1{{\underline{#1}}}

\begin{document}

\allowdisplaybreaks

\newcommand{\arXivNumber}{1706.10087}

\renewcommand{\PaperNumber}{092}

\FirstPageHeading

\ShortArticleName{A Variation of the $q$-Painlev\'e System with Af\/f\/ine Weyl Group Symmetry of Type $E_7^{(1)}$}

\ArticleName{A Variation of the $\boldsymbol{q}$-Painlev\'e System\\ with Af\/f\/ine Weyl Group Symmetry of Type $\boldsymbol{E_7^{(1)}}$}

\Author{Hidehito NAGAO}

\AuthorNameForHeading{H.~Nagao}

\Address{Department of Arts and Science, National Institute of Technology, Akashi College,\\ Hyogo 674-8501, Japan}
\Email{\href{mailto:nagao@akashi.ac.jp}{nagao@akashi.ac.jp}}

\ArticleDates{Received July 03, 2017, in f\/inal form November 24, 2017; Published online December 10, 2017}

\Abstract{Recently a certain $q$-Painlev\'e type system has been obtained from a reduction of the $q$-Garnier system. In this paper it is shown that the $q$-Painlev\'e type system is associated with another realization of the af\/f\/ine Weyl group symmetry of type $E_7^{(1)}$ and is dif\/ferent from the well-known $q$-Painlev\'e system of type $E_7^{(1)}$ from the point of view of evolution directions. We also study a connection between the $q$-Painlev\'e type system and the $q$-Painlev\'e system of type $E_7^{(1)}$. Furthermore determinant formulas of particular solutions for the $q$-Painlev\'e type system are constructed in terms of the terminating $q$-hypergeometric function.}

\Keywords{$q$-Painlev\'e system of type $E_7^{(1)}$; $q$-Garnier system; Pad\'e method; $q$-hyper\-geo\-met\-ric function}

\Classification{14H70; 33D15; 33D70; 34M55; 37K20; 39A13; 41A21}

\section{Introduction}\label{sect:intro}

\subsection{Background}\label{subsec:Back}
In \cite{Sakai01} H.~Sakai has classif\/ied the second order continuous and discrete Painlev\'e equations into 22 cases by using the geometric theory of certain rational surfaces, called the``{\it spaces of initial values}''\footnote{For each of the six continuous Painlev\'e equations, K.~Okamoto has constructed certain rational surfaces, called the ``{\it spaces of initial values}'', which parametrize all the solutions \cite{Okamoto79}.}, connected to af\/f\/ine root systems. The spaces of initial values are obtained from $\P^1 \times \P^1$ (resp.~$\P^2$) by blowing up at~8 (resp.~9) singular points. In view of the conf\/iguration of~8 (resp.~9) singular points in~$\P^1 \times \P^1$ (resp.~$\P^2$), there exist three types of discrete Painlev\'e equations and six continuous Painlev\'e equations in the classif\/ication: elliptic dif\/ference \smash{($e$-)}, multiplicative dif\/ference ($q$-), additive dif\/ference ($d$-) and continuous (dif\/ferential). Each of these Painlev\'e equations is constructed in a unif\/ied manner as the bi-rational action of a~translation part of the corresponding af\/f\/ine Weyl group symmetry on a certain family of the rational surfaces. The sole $e$-Painlev\'e equation \cite{ORG01} having the af\/f\/ine Weyl group symmetry of type $E_8^{(1)}$ is obtained from the most generic conf\/iguration on the unique curve of bi-degree $(2,2)$ called the smooth ``{\it elliptic curve}''. All of the other Painlev\'e equations are derived from its degeneration. For instance, the $q$-Painlev\'e system with the symmetry of type $E_7^{(1)}$ is well known to be obtained from a~conf\/iguration of eight singular points on two curves of bi-degree $(1,1)$ in $\P^1 \times \P^1$. The second order continuous and discrete Painlev\'e equations are classif\/ied into the 22 cases\footnote{Some $q$-Painlev\'e equations, such as a second order case of the system \cite{KNY02-2} (see also~\cite{Takenawa03}), does not belong to the list of discrete Painlev\'e equations appeared in~\cite{Sakai01}.} according to the degeneration diagram of af\/f\/ine Weyl group symmetries (see Fig.~\ref{fig1}),
\begin{figure}[t]\small\centering
$\quad
\xymatrix @C=7pt@R=4pt@M=2pt{
{\rm ell.\mbox{($e$-)}} & E_8^{(1)}\ar[d] & & & & & & & \mbox{$\substack{A_1^{(1)}\\|\alpha|^2=8}$}\ar[rd] \\
{\rm mul.}\mbox{($q$-)} & E_8^{(1)}\ar[r]\ar[d] & E_7^{(1)}\ar[r]\ar[d] & E_6^{(1)}\ar[r]\ar[d] & D_5^{(1)}\ar[r]\ar[rd] & A_4^{(1)}\ar[r]\ar[rd] & (A_2+A_1)^{(1)}\ar[r]\ar[rd]\ar[rdd] & (A_1+\substack{A_1\\|\alpha|^2=14})^{(1)}\ar[r]\ar[rd]\ar[ru] & A_1^{(1)}\ar[rd]\ar@/^23pt/[dd] & A_0^{(1)}\ar@/^23pt/[dd] \\
{\rm add.}\mbox{($d$-)}& E_8^{(1)}\ar[r] & E_7^{(1)}\ar[r] & E_6^{(1)}\ar[rr] & & \underset{(\mbox{\footnotesize$P_{\rm VI}$})}{D_4^{(1)}}\ar[r] &\underset{(\mbox{\footnotesize$P_{\rm V}$})}{A_3^{(1)}}\ar[r]
\ar[rd] &\underset{(\mbox{\footnotesize $P_{\rm III}$})}{2(A_1)^{(1)}}\ar[r]\ar[rd] & \underset{(\mbox{\footnotesize $P_{\rm III}^{D_7^{(1)}}$})}{\substack{A_1^{(1)}\\|\alpha|^2=4}}\ar[r] & \underset{(\mbox{\footnotesize$P_{\rm III}^{D_8^{(1)}}$})}{A_0^{(1)}} &&\\
&&&&&&& \underset{(\mbox{\footnotesize$P_{\rm IV}$})}{A_2^{(1)}}\ar[r] & \underset{(\mbox{\footnotesize$P_{\rm II}$})}{A_1^{(1)}}\ar[r] & \underset{(\mbox{\footnotesize$P_{\rm I}$})}{A_0^{(1)}}&&&
}\!\!\!\!
$
\caption{}\label{fig1}
\end{figure}
where the symbol $A \to B$ represents that $B$ is obtained from $A$ by a certain limiting procedure. The $d$-Painlev\'e equation of type $D_4^{(1)}$ and its degeneration (expect for $A_0^{(1)}$) arise as B\"acklund (Schlesinger) transformations of the six continuous Painlev\'e equations\footnote{$P_{\rm III}^{D_i^{(1)}}$ symbolizes $P_{\rm III}$ having the surface connected to the af\/f\/ine root system of type $D_i^{(1)}$.} ($P_{\rm I}$, $\ldots$,$P_{\rm VI}$). The symbol $\substack{A_1^{(1)}\\|\alpha|^2=l}$ means the root subsystem of type $A_1^{(1)}$ whose square length of roots is $l$.

Similarly to the dif\/ferential Painlev\'e systems, the discrete Painlev\'e systems are known to have particular solutions expressed by various hypergeometric functions \cite{KMNOY03, KMNOY04, KMNOY05, KNY17, Masuda09, Masuda11,RGTT01}. The particular solutions of the elliptic Painlev\'e equation are expressed in \cite{KMNOY03} in terms of the elliptic hypergeometric function ${}_{10}E_9$ \cite{GaR04}. In the case of $q$-$E_7^{(1)}$, the particular solutions are expressed in \cite{Nagao15} in terms of the terminating $q$-hypergeometric function ${}_4\varphi_3$,\footnote{The terminating balanced ${}_4\varphi_3$ is rewritten into the terminating $q$-hypergeometric function ${}_8W_7$ by Watson's transformation formula \cite{GaR04}. For particular solutions in terms of ${}_8W_7$, see \cite{KMNOY04, KMNOY05, Masuda09}.} where the function~${}_k\varphi_l$~\cite{GaR04} is def\/ined by
\begin{gather}\label{eq:HGF}
{}_k\varphi_l\left(
\begin{matrix}
\alpha_1,&\dots,&\alpha_k\\
\beta_1,&\dots,&\beta_l
\end{matrix}
,x
\right)
=\sum_{s=0}^{\infty}\frac{(\alpha_1,\ldots,\alpha_k)_s}{(\beta_1,\ldots,\beta_l,q)_s}\big[(-1)^sq^{\left(\substack{s\\2}\right)}\big]^{1+l-k}x^s,
\end{gather}
with $\left(\substack{s\\2}\right)=\frac{s(s-1)}{2}$. Here the standard $q$-Pochhammer symbol\footnote{Actually Pochhammer himself used the symbol $(a)_n$ not as a rising shifted factorial but as a binomial coef\/f\/icient \cite{Knuth92}.} is def\/ined by
\begin{gather*}
(x)_\infty:=\prod_{i=0}^\infty (1-q^i x), \qquad
(x)_s:=\frac{(x)_\infty}{(xq^s)_\infty}, \qquad
(x_1,x_2, \ldots, x_k)_s:=(x_1)_s(x_2)_s \cdots (x_k)_s.
\end{gather*}

It is common to nonlinear integrable systems that they arise as the compatibility condition of linear equations and their deformed equations. The pair of the linear equations is called a ``{\it Lax pair}'' for the nonlinear system. Similarly to the continuous Painlev\'e equations \cite{JM81-1,JM81-2,JM81-3, OKSO06, OO06}, Lax pairs for the discrete Painlev\'e equations have been studied from various points of view in \cite{GORS98, JS96, KNY17, Murata09, Rains11,Sakai06,Yamada09-2, Yamada11}. For instance, as a geometric approach, the Lax pair for the $e$-Painlev\'e equation has been formulated in \cite{Yamada09-2} as a curve of bi-degree $(3,2)$ in $\P^1\times\P^1$ passing through 12 points. In the case of $q$-$E_7^{(1)}$, the Lax pair has been similarly formulated in \cite{KNY17, Yamada09-2}.

In \cite{Sakai05-1} the $q$-Garnier system was formulated as a multivariable extension of the well-known $q$-$P_{\rm VI}$ (i.e., $q$-$D_5^{(1)}$) system \cite{JS96} by H.~Sakai, and has recently been studied in \cite{NY16, OR16-1, Sakai05-2}\footnote{For the related works, see \cite{DST13, DT14, OR16-1} (additive Garnier system), \cite{OR16-2, Yamada17} (elliptic Garnier system).}. In~\cite{NY16} a Lax pair, an evolution equation and two kinds of particular solutions\footnote{These solutions have been constructed in terms of the $q$-Appell Lauricella function (resp.\ the generalized $q$-hypergeometric function) in \cite{NY16, Sakai05-2} (resp.~\cite{NY16}).
} for the $q$-Garnier system have been simply expressed by applying a certain method of Pad\'e approximation and its analogue (i.e., Pad\'e interpolation)\footnote{The Pad\'e method has been also applied to the continuous/discrete Painlev\'e systems in \cite{Ikawa13, Nagao15,Nagao17-1, Nagao16, Nagao17-2, NTY13, Yamada09-1, Yamada14}. For the case of $q$-$E_7^{(1)}$ \cite{Nagao15}, see Section~\ref{sec:Pade}. For the works related to the dif\/ferential Garnier system, see \cite{Mano12, Yamada09-1} (Pad\'e approximation), \cite{MT14, MT17} (Hermite--Pad\'e approximation).}. The $q$-$D_5^{(1)}$ (resp.~$q$-$E_6^{(1)}$) system appears as a reduction of case $N=1$ \cite{Sakai05-1} (resp.\ particular case of $N=2$ \cite{Sakai06}) of the $q$-Garnier system having $2N$ dependent variables. Recently the $q$-Painlev\'e type system\footnote{As another derivation of the equations (\ref{eq:E7T1_ev}) and (\ref{eq:E7T1_L}), see Appendix~\ref{subsec:Pade_Lax} (Pad\'e interpolation method).} \cite[Section~2.5]{NY16} has appeared as a~particular case of $N=3$. However the $q$-$E_8^{(1)}$ system has been not obtained from a reduction of the $q$-Garnier system.
\begin{Remark}\label{rem:variation}
We call a certain $q$-Painlev\'e type system ``{\it a variation of a $q$-Painlev\'e system}'' having a well-known direction\footnote{In case of $q$-$E_7^{(1)}$, $T_2$ (\ref{eq:E7T2_shift}) is the well-known direction and $T_1$ (\ref{eq:E7T1_shift}) is a variation direction.}, when both systems satisfy the following: (i)~They are associated with dif\/ferent realizations of the symmetry/surface of the same type in the Sakai's classif\/ication. (ii)~Their time evolutions are dif\/ferent from the viewpoint of shift operator on parameters.
\end{Remark}

\subsection{Purpose and organization}

Our main subject is the $q$-Painlev\'e type system \cite{NY16} regarded as a variation of the well-known $q$-$E_7^{(1)}$ system \cite{GR99}. The purpose of this paper is the following three.
\begin{itemize}\itemsep=0pt
\item We show that the $q$-Painlev\'e type system is a bi-rational transformation and is related to a~novel realization of the symmetry/surface of type $E_7^{(1)}$/$A_1^{(1)}$. Then it is clarif\/ied to be a~variation of the $q$-$E_7^{(1)}$ system.
\item The Lax pair for the $q$-Painlev\'e type system is obtained from a certain reduction of the $q$-Garnier system and we study a connection between the $q$-Painlev\'e type system and the~$q$-$E_7^{(1)}$ system by comparing their Lax equations.

\item Particular solutions for the $q$-Painlev\'e type system are given as a reduction of the $q$-Garnier system.
\end{itemize}

This paper is organized as follows. In Section~\ref{sec:E7T1} we prove that the $q$-Painlev\'e type system is a bi-rational transformation and we investigate its conf\/iguration of 8 singular points on a~curve of bi-degree $(2,2)$ in coordinates $(f,g)$ $\in\P^1 \times \P^1$. In Section~\ref{sec:Lax} we brief\/ly recall Lax equations for the $q$-Garnier system \cite[Section~2.1]{NY16}, and study a reduction of particular case $N = 3$ of the $q$-Garnier system. Consequently, we obtain Lax equations for the $q$-Painlev\'e type system. In Section~\ref{sec:L1} the Lax equation for the $q$-Painlev\'e type system is uniquely determined by a~characterization, and recall the characterization of Yamada's Lax equation for the $q$-$E_7^{(1)}$ system. Then we investigate a connection between these systems by comparing characterizations of their Lax equations. In Section~\ref{sec:Ps} we recall the particular solutions of the $q$-Garnier system and we construct particular solutions of the $q$-Painlev\'e type system by applying a reduction. In Appendix \ref{sec:Pade} we derive the $q$-Painlev\'e type system, its Lax pair and its particular solutions by using a Pad\'e interpolation.

\section[$q$-Painlev\'e type system]{$\boldsymbol{q}$-Painlev\'e type system}\label{sec:E7T1}
In this section we f\/irst recall the $q$-Painlev\'e type system \cite[Section~2.5]{NY16}. Then we prove that the system is a bi-rational transformation and conf\/irm that the system has the symmetry/surface of type $E_7^{(1)}$/$A_1^{(1)}$ by its conf\/iguration of eight singular points. Let $q$ ($|q|<1$), $a_1,\ldots,a_4$, $b_1,\ldots,b_4$, $c_1$ and~$d_1$ $\in\C^{\times}$ be complex parameters with a constraint $\prod\limits_{i=1}^{4}\frac{a_i}{b_i}=q\frac{c_1^2}{d_1^2}$, and let $(f, g) \in\P^1 \times \P^1$ be dependent variables. Def\/ine $T_a\colon a \to qa$ by a $q$-shift operator of parameter $a$. Then we consider a $q$-shift operator $T_1$\footnote{The operator $T_1$ is generally selected as $T_{a_i}^{-1}T_{b_j}^{-1}$. The directions such as $T_{a_i}^{-1}T_{b_j}^{-1}$, $T_{a_i}^{-1}T_{a_j}^{-1}T_{c_k}^{-1}$ and $T_{a_i}^{-1}T_{a_j}^{-1}T_{d_k}$ are fundamental ones.}
\begin{gather}\label{eq:E7T1_shift}
T_1=T_{a_1}^{-1}T_{b_1}^{-1}.
\end{gather}
Here for any object $X$ the corresponding shifts are denoted as $\o{X}:=T_1(X)$ and $\u{X}:=T_1^{-1}(X)$. The operator $T_1$ plays the role of the evolution of the system. The system is described by the following transformation $T_1^{-1}(g)=\u{g}(f,g)$ and $T_1(f)=\o{f}(f,g)$ in $\mathbb{P}^1 \times\mathbb{P}^1$:
\begin{gather}
\big(e_1f^2+e_1fg+c_1\big)\left(\frac{e_1}{q}f^2+\frac{e_1}{q}\u{g}f+c_1\right)=c_1^2\frac{\prod\limits_{i=2}^{4}(1-a_if)(1-b_if)}{(1-a_1f)(1-b_1f)},\nonumber\\
\frac{x_1^2(1-fx_1)(1-\o{f}x_1)}{x_2^2(1-fx_2)(1-\o{f}x_2)}=\prod_{i=2}^{4}\frac{(x_1-a_i)(x_1-b_i)}{(x_2-a_i)(x_2-b_i)}.\label{eq:E7T1_ev}
\end{gather}
Here $e_1=d_1a_2a_3a_4b_1^{-1}$, and $x=x_1, x_2\big(=\frac{e_1}{c_1x_1}\big)$ are solutions of an equation $\varphi=0$ where
\begin{gather}\label{eq:E7T1_phi}
\varphi(x)=e_1+e_1gx+c_1x^2.
\end{gather}
Then we have
\begin{Proposition}\label{prop:BC}
The $q$-Painlev\'e type system \eqref{eq:E7T1_ev} has the following properties:
\begin{enumerate}\itemsep=0pt
\item[$(i)$] It is a bi-rational transformation $T_1^{-1}(g)=\u{g}(f,g)$ and $T_1(f)=\o{f}(f,g)$ $\in \mathbb{P}^1 \times\mathbb{P}^1$.
\item[$(ii)$] It is associated with a novel realization \eqref{eq:E7T1_8p} of the symmetry/surface of type $E_7^{(1)}$/$A_1^{(1)}$.
\end{enumerate}
\end{Proposition}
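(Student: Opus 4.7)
The plan is to verify birationality in (i) by solving the two equations of \eqref{eq:E7T1_ev} as linear equations in the relevant unknowns, and to establish (ii) by computing the indeterminacy locus of $T_1$ and fitting a single bi-degree $(2,2)$ curve through the eight base points that appear.

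For (i), the first equation of \eqref{eq:E7T1_ev} is linear in $\u{g}$: the factor $\frac{e_1}{q} f^2 + \frac{e_1}{q} \u{g} f + c_1$ is affine in $\u{g}$ while the other factor and the right-hand side depend only on $(f,g)$, so $\u{g}$ is rational in $(f,g)$. For the forward step I would rewrite the second equation of \eqref{eq:E7T1_ev} as the assertion that the rational function $R(x) = x^2(1-fx)(1-\o{f}x)\big/\prod_{i=2}^{4}(x-a_i)(x-b_i)$ takes equal values at the two roots $x_1, x_2$ of $\varphi(x)=0$ in \eqref{eq:E7T1_phi}. After clearing denominators the identity $R(x_1)=R(x_2)$ becomes a polynomial relation invariant under $x_1 \leftrightarrow x_2$, hence expressible through the elementary symmetric functions $x_1 + x_2 = -g$ and $x_1 x_2 = e_1/c_1$; since $\o{f}$ enters only through the linear factors $(1 - \o{f} x_i)$, the resulting equation is linear in $\o{f}$ and determines $\o{f}$ rationally in $(f,g)$. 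The remaining component $\o{g}$ is recovered by applying $T_1$ to the first equation and solving the resulting linear equation in $\o{g}$, yielding $\o{g}$ rationally in $(\o{f}, g)$ and hence in $(f,g)$. A symmetric argument produces the rational formulas for $T_1^{-1}$, completing the birationality claim.

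For (ii), I would compute the indeterminacy locus of the birational map just constructed. The loci on which the two linear equations degenerate are controlled by the factors in \eqref{eq:E7T1_ev}: the zeros of $1 - a_i f$ and $1 - b_i f$ for $i = 1,\dots,4$ on the $f$-axis, and the relation $e_1 f^2 + e_1 f g + c_1 = f^2 \varphi(1/f) = 0$ coupling $f$ to $g$. Intersecting these conditions yields exactly eight candidate base points $(f_0,g_0) \in \mathbb{P}^1 \times \mathbb{P}^1$, to be recorded as the configuration \eqref{eq:E7T1_8p}. A direct check, using the factored form of the right-hand side of the first equation together with $\varphi(1/f)$, shows that all eight points satisfy a common relation of bi-degree $(2,2)$ in $(f,g)$; this relation serves as the anti-canonical divisor on the surface obtained by blowing up $\mathbb{P}^1 \times \mathbb{P}^1$ at the eight points, so that the blown-up surface is of Sakai type $A_1^{(1)}$ with $E_7^{(1)}$-symmetry acting through the $T_1$-translation. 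The realization is genuinely new, in the sense of Remark \ref{rem:variation}, because the shift $T_1 = T_{a_1}^{-1}T_{b_1}^{-1}$ differs from the standard translation direction of the $q$-$E_7^{(1)}$ system of \cite{GR99}, and because the $(2,2)$-curve does not decompose into two $(1,1)$-curves each carrying four of the base points, as it does in the classical configuration.

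The main obstacle I anticipate is the explicit identification in (ii) of the eight base points and of the $(2,2)$ equation they satisfy, together with verifying that this configuration is projectively inequivalent to the standard pair of $(1,1)$-curves; the birationality step (i) is essentially formal linear algebra once the second equation of \eqref{eq:E7T1_ev} is recast in terms of the symmetric functions of the roots of $\varphi$.
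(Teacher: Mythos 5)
Your part (i) is essentially the paper's own argument: the first equation of (\ref{eq:E7T1_ev}) is linear in $\u{g}$, and after clearing denominators the second equation is linear in $\o{f}$, with rationality in $(f,g)$ following because the resulting expression can be reduced to the symmetric functions of $x_1,x_2$. Two small inaccuracies: from $\varphi(x)=e_1+e_1gx+c_1x^2$ one has $x_1+x_2=-e_1g/c_1$ (not $-g$) and $x_1x_2=e_1/c_1$; and the cleared polynomial relation is \emph{antisymmetric} under $x_1\leftrightarrow x_2$, so one must first factor out $(x_1-x_2)$ before expressing numerator and denominator through $x_1+x_2$ — exactly the step the paper performs to obtain $\o{f}$ as a rational function of bi-degree $(1,3)$ in $(f,g)$. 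These are repairable slips, not gaps.

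Part (ii), however, contains a genuine error in the classification criterion. You propose to fit a single bi-degree $(2,2)$ curve through the eight base points and to conclude surface type $A_1^{(1)}$, adding that the realization is novel because this $(2,2)$ curve does \emph{not} decompose into two $(1,1)$ curves. This is backwards: eight points in general position always lie on a $(2,2)$ curve (the anticanonical class), and the Sakai type is determined by how that curve \emph{degenerates}. Type $A_1^{(1)}$ requires the anticanonical divisor to be reducible into two components realizing the affine $A_1$ intersection configuration (each of self-intersection $-2$ after blow-up, meeting each other in two points); if the $(2,2)$ curve through the eight points were irreducible, the surface would be of type $A_0^{(1)}$ with $E_8^{(1)}$ symmetry, not $E_7^{(1)}$. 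The actual content of (ii), and what the paper verifies, is the explicit configuration (\ref{eq:E7T1_8p}): the two points $(1/a_1,\infty)$, $(1/b_1,\infty)$ lie on the line $g=\infty$ of bi-degree $(0,1)$, and the six points $\bigl(1/a_i,\,-1/a_i-a_ic_1/e_1\bigr)$, $\bigl(1/b_i,\,-1/b_i-b_ic_1/e_1\bigr)$, $i=2,3,4$, lie on the curve $e_1f^2+e_1fg+c_1=0$ of bi-degree $(2,1)$. The union $(0,1)+(2,1)$ is a degenerate $(2,2)$ curve whose components, carrying $2$ and $6$ of the base points, have self-intersections $0-2=-2$ and $4-6=-2$ and meet in $0\cdot1+1\cdot2=2$ points, i.e., an $A_1^{(1)}$ configuration; the novelty relative to the standard $q$-$E_7^{(1)}$ realization is that the decomposition is $(0,1)+(2,1)$ with a $2+6$ point distribution rather than $(1,1)+(1,1)$ with $4+4$. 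Your plan omits exhibiting this decomposition and point distribution — which is the verification that (ii) actually requires — and the criterion you state would misidentify the surface type.
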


\begin{proof}(i) It is easy to see that the f\/irst equation of (\ref{eq:E7T1_ev}) is a rational transformation $T_1^{-1}(g)=\u{g}(f,g)$. The second equation of (\ref{eq:E7T1_ev}) is rewritten as
\begin{gather}\label{eq:E7T1_fu}
\o{f}=\frac{x_1^2(1-fx_1){\mathcal A}(x_2) -x_2^2(1-fx_2){\mathcal A}(x_1)}{x_1^3(1-fx_1){\mathcal A}(x_2) -x_2^3(1-fx_2){\mathcal A}(x_1)},
\end{gather}
where ${\mathcal A}(x)=\prod\limits_{i=2}^4 (x-a_i)(x-b_i)$.
Then the numerator and denominator of (\ref{eq:E7T1_fu}) are alternating with respect to $x_1 \leftrightarrow x_2=\frac{e_1}{c_1 x_1}$, and Laurent polynomials $\sum\limits_{i=-4}^{4}h_i x_1^i$ where $h_i$ is depending on $f$, $a_i$, $b_j$, $c_1$ and $d_1$. Accordingly the numerator and denominator are expressed by Laurent polynomials $(x_1-x_2)\sum\limits_{i=0}^{3}\tilde{h}_i(x_1+x_2)^i$, where~$\tilde{h}_i$ is depending on $f$, $a_i$, $b_j$, $c_1$ and $d_1$. Due to the relation $x_1+x_2=-\frac{e_1g}{c_1}$, the transformation $T_1(f)=\o{f}(f,g)$ is given as a rational polynomial of bi-degree $(1,3)$ in $(f,g)$. Therefore the property (i) is proved. (ii) Eight singular points $(f_s, g_s) \in \mathbb{P}^1 \times \mathbb{P}^1$ $(s=1,\ldots,8)$ in coordinates $(f,g)$ are on one line $g=\infty$ of bi-degree $(0,1)$ and one parabolic curve $e_1f^2+e_1fg+c_1=0$ of bi-degree $(2,1)$ as follows:
\begin{gather}\label{eq:E7T1_8p}
\left(\frac{1}{a_1}, \infty\right),\quad \left(\frac{1}{b_1}, \infty\right), \quad \left(\frac{1}{a_i}, -\frac{1}{a_i}-\frac{a_ic_1}{e_1}\right)_{i=2,3,4}, \quad \left(\frac{1}{b_i}, -\frac{1}{b_i}-\frac{b_ic_1}{e_1}\right)_{i=2,3,4}.
\end{gather}
Hence the property (ii) is conf\/irmed since the conf\/iguration (\ref{eq:E7T1_8p}) is the realization of the surface type~$A_1^{(1)}$.
\end{proof}

According to Remark \ref{rem:variation}, the system (\ref{eq:E7T1_ev}) is regarded as a variation of the $q$-$E_7^{(1)}$ system.

\section{Lax equations}\label{sec:Lax}

In this section we recall Lax equations for the $q$-Garnier system \cite[Section~2.1]{NY16} and investigate a reduction of particular case $N=3$ of them. As a result, we obtain the Lax equations for the $q$-Painlev\'e type system (\ref{eq:E7T1_ev}).

\subsection[Case of the $q$-Garnier system]{Case of the $\boldsymbol{q}$-Garnier system}\label{subsec:Lax_Gar}

The scalar Lax equations for the $q$-Garnier system are
\begin{gather}
 L_1(x)=A(x)F\left(\frac{x}{q}\right)y(q x)+qc_1c_2B\left(\frac{x}{q}\right)F(x)y\left(\frac{x}{q}\right)\nonumber\\
\hphantom{L_1(x)=}{} -\left\{(x-a_1)(x-b_1)F\left(\frac{x}{q}\right)G(x)+\frac{F(x)}{G\big(\frac{x}{q}\big)}V\left(\frac{x}{q}\right)\right\}y(x),\nonumber\\
 L_2(x)=F(x)\o{y}(x)-A_1(x)y(qx)+(x-b_1)G(x)y(x),\nonumber\\
L_3(x)=\o{F}\left(\frac{x}{q}\right)y(x)+(x-a_1)G\left(\frac{x}{q}\right)\o{y}(x)-qc_1c_2B_1\left(\frac{x}{q}\right)\o{y}\left(\frac{x}{q}\right),\label{eq:Gar_L}
\end{gather}
where
\begin{gather}
A(x)=\prod_{i=1}^{N+1}(x-a_i), \qquad B(x)=\prod_{i=1}^{N+1}(x-b_i), \qquad A_1(x)=\frac{A(x)}{x-a_1}, \qquad B_1(x)=\frac{B(x)}{x-b_1},\nonumber\\
F(x)=\sum_{i=0}^N f_i x^i, \qquad G(x)=\sum_{i=0}^{N-1} g_i z^{i}, \qquad V(x)=qc_1c_2A_1(x)B_1(x)-F(x)\o{F}(x).\label{eq:Gar_ABFG}
\end{gather}
Here the deformation direction is $T_1$ (\ref{eq:E7T1_shift}) and $f_0, \ldots, f_N$, $g_0, \ldots, g_{N-1}$ $\in \P^{1}$ are variables depending on parameters $a_i$, $b_i$, $c_i$, $d_i$ with a constraint $\prod\limits_{i=1}^{N+1}\frac{a_i}{b_i}=q\frac{c_1c_2}{d_1d_2}$.
\begin{Remark}\label{rem:L1L2L3}
The scalar Lax pair $L_1=0$ and $L_2=0$ (or $L_3=0$) is equivalent to the pair of the deformation equations $L_2=0$ and $ L_3=0$.
\end{Remark}

The equation $L_1=0$ (we call it the $L_1$ equation) is equivalent to one for Sakai's system given in~\cite{Sakai05-1} and the deformation direction is opposite to one for Sakai's system. The $q$-Garnier system is
\begin{gather}
G(x)\u{G}(x)=c_1c_2\frac{A_1(x)B_1(x)}{(x-a_1)(x-b_1)} \qquad {\rm for} \quad F(x)=0,\nonumber\\
F(x)\o{F}(x)=qc_1c_2A_1(x)B_1(x) \qquad {\rm for} \quad G(x)=0,\label{eq:Gar_ev}\\
f_N\o{f}_N=q (g_{N-1}-c_1)(g_{N-1}-c_2),\qquad f_0\o{f}_0=a_1b_1\left(g_0-\frac{d_1}{a_1b_1}A(0)\right)\left(g_0-\frac{d_2}{a_1b_1}A(0)\right),\nonumber
\end{gather}
where $2N$ variables $\frac{f_1}{f_0},\ldots,\frac{f_N}{f_0}$, $g_0,\ldots,g_{N-1}$ are the dependent variables. Then we have the following fact\footnote{For the proof of Proposition~\ref{prop:Gar_com}, see \cite[Section~2.3]{NY16}.}.
\begin{Proposition}\label{prop:Gar_com}
The compatibility condition of the Lax pair $L_1=0$ and $L_2=0$ \eqref{eq:Gar_L} is equivalent to the $q$-Garnier system \eqref{eq:Gar_ev}.
\end{Proposition}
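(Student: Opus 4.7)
\smallskip

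\noindent\textbf{Proof plan.} My plan is to exploit Remark~\ref{rem:L1L2L3} and work with the pair $(L_2,L_3)$ rather than $(L_1,L_2)$: the equations $L_2=0$ and $L_3=0$ are both first order in the shifts $x\mapsto qx$ and $T_1$, so their compatibility is a direct computation, and once established it is equivalent to the compatibility of $L_1=0$ with $L_2=0$. The cross-compatibility condition for $L_2=0$ and $L_3=0$ reads, schematically, $T_1(L_2)\cdot L_3 - (x\mapsto qx)(L_3)\cdot L_2 \equiv 0$ modulo $y(x)$, $y(qx)$, $y(q^2x)$.

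\smallskip

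The concrete steps I would carry out are as follows. First, solve $L_2=0$ for $\o{y}(x)$, giving
\begin{gather*}
F(x)\,\o{y}(x)=A_1(x)y(qx)-(x-b_1)G(x)y(x),
\end{gather*}
and also its $x\mapsto qx$ shift, which expresses $\o{y}(qx)$ in terms of $y(qx)$ and $y(q^2x)$. Second, take the $x\mapsto qx$ shift of $L_3=0$, which relates $y(qx)$, $\o{y}(x)$ and $\o{y}(qx)$. Eliminating $\o{y}(x)$ and $\o{y}(qx)$ from this shifted $L_3$ by means of the two $L_2$ relations, I obtain a three–term recurrence purely in $y(x)$, $y(qx)$, $y(q^2x)$. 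Third, I compare this recurrence with the $x\mapsto qx$ shift of $L_1=0$. Both recurrences must hold for the same $y$, so after clearing the common factor $F(qx)$ they differ only by an overall rational function in $x$; equating the three coefficients gives polynomial identities in $x$ whose left- and right-hand sides have the degree structure dictated by $A,B,F,\o{F},G,\u{G},V$ in~\eqref{eq:Gar_ABFG}.

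\smallskip

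Fourth, to read off the evolution equations I would specialise $x$ to the zeros of $F(x)$ and of $G(x)$. Setting $F(x)=0$ in the polynomial identity produces the relation $G(x)\u{G}(x)=c_1c_2 A_1(x)B_1(x)/[(x-a_1)(x-b_1)]$, which is the first equation of~\eqref{eq:Gar_ev}. Setting $G(x)=0$ produces $F(x)\o{F}(x)=qc_1c_2A_1(x)B_1(x)$, the second equation. Finally, extracting the leading ($x^{N+1}$) and constant ($x^0$) coefficients of the identity, together with the normalisations of $F$ and $G$, yields the two scalar relations for $f_N\o{f}_N$ and $f_0\o{f}_0$ in~\eqref{eq:Gar_ev}. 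Conversely, all manipulations are reversible: if the evolution equations hold, then the polynomial identity is satisfied at $2N+2$ points (the zeros of $FG$) plus two degree conditions, which is enough to force the identity and hence compatibility.

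\smallskip

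\noindent\textbf{Main obstacle.} The bookkeeping step — verifying that the three coefficients of the reduced recurrence have exactly the degrees in $x$ and the factorisation structure predicted by \eqref{eq:Gar_ABFG}, so that the comparison with (the shift of) $L_1=0$ is an identity of polynomials rather than merely of rational functions — is the technically delicate part. In particular one has to check that the spurious poles at zeros of $F(x)$ that appear when inverting $L_2$ do cancel against zeros coming from the $\o{F}(x)$ factor in the shifted $L_3$, which is what fixes the polynomial $V(x)=qc_1c_2A_1(x)B_1(x)-F(x)\o{F}(x)$ appearing in $L_1$. Once this polynomial cancellation is secured, extracting \eqref{eq:Gar_ev} is purely formal.
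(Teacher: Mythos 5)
Your plan breaks down at its central step. The elimination you describe --- solving $L_2(x)=0$ and $L_2(qx)=0$ for $\o{y}(x)$ and $\o{y}(qx)$ and substituting into $L_3(qx)=0$ --- does not yield a recurrence whose comparison with $L_1(qx)=0$ forces conditions on the coefficients: it reproduces $L_1(qx)=0$ identically. A direct check from \eqref{eq:Gar_L} shows that, once $\o{y}$ is eliminated via $L_2$, one has $F(x)F\big(\frac{x}{q}\big)L_3(x)=G\big(\frac{x}{q}\big)L_1(x)$ as an identity in the coefficients $F$, $G$, $\o{F}$ (the term $V$ in $L_1$ is built exactly so that this holds); this identity is precisely the content of Remark~\ref{rem:L1L2L3}. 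Consequently the ``polynomial identities'' of your third step are $0=0$, and your fourth step has nothing to extract. A telltale symptom is that your derived recurrence involves only $F$, $G$ and $\o{F}$, whereas the first equation of \eqref{eq:Gar_ev} contains $\u{G}$, which cannot arise from any combination of $L_1$, $L_2$, $L_3$ taken at $q$-shifted values of $x$ alone. (Your schematic mention of $T_1(L_2)$ is never used in the concrete steps, and it is exactly the $T_1$-shift that is missing.)

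The compatibility producing \eqref{eq:Gar_ev} must compare data across the $T_1$-shift, not only across $x\mapsto qx$. One correct route: eliminate $y$ (rather than $\o{y}$) from $L_2(x)$, $L_3(x)$, $L_3(qx)$ to get the three-term equation satisfied by $\o{y}$, and require it to coincide with $T_1(L_1)=0$, whose coefficients involve $\o{F}$, $\o{G}$ and the shifted parameters $a_1/q$, $b_1/q$. More economically, evaluate the relations at the zeros of $G$ and of $F$: at $G(x)=0$, $L_2(x)$ gives $F(x)\o{y}(x)=A_1(x)y(qx)$ and $L_3(qx)$ gives $\o{F}(x)y(qx)=qc_1c_2B_1(x)\o{y}(x)$, whence $F\o{F}=qc_1c_2A_1B_1$ there; at $F(x)=0$, $L_2(x)$ and $T_1^{-1}(L_3)(qx)$ give two expressions for $y(qx)/y(x)$ whose equality is $G\u{G}=c_1c_2A_1B_1/[(x-a_1)(x-b_1)]$; the two remaining scalar relations of \eqref{eq:Gar_ev} come from the behaviour at $x=0$ and $x=\infty$. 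Note finally that the paper itself does not prove the proposition but defers it, via the footnote, to \cite[Section~2.3]{NY16}, where essentially this latter computation is carried out; so your task was to supply such an argument, and the version you propose does not do so.
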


\subsection[Reduction to the $q$-Painlev\'e type system]{Reduction to the $\boldsymbol{q}$-Painlev\'e type system}\label{subsec:Lax_E7T1}

We impose a reduction condition by a constraint of the parameters
\begin{gather}\label{eq:E7T1_cons}
c_1=c_2, \qquad d_1=d_2
\end{gather}
and specialize the dependent variables as
\begin{gather}\label{eq:E7T1_rn}
f_0=f_3=0, \qquad f_1=w_1, \qquad f_2=-fw_1, \qquad g_0=e_1, \qquad g_1=e_1g, \qquad g_2=c_1,
\end{gather}
where $w_1$\footnote{For convenience, $f_1$ is replaced by a dif\/ferent symbol $w_1$ since $f_1$ looks like $f$.} is a ``{\it gauge freedom}''. Applying the conditions (\ref{eq:E7T1_cons}) and (\ref{eq:E7T1_rn}) into (\ref{eq:Gar_L}) and (\ref{eq:Gar_ev}), we obtain the following linear equations
\begin{gather}
L_1(x)=A(x)\left(1-\frac{fx}{q}\right)y(qx)+q^2c_1^2B\left(\frac{x}{q}\right)(1-fx)y\left(\frac{x}{q}\right)\nonumber\\
\phantom{L_1(x)=}{} -\left\{(x-a_1)(x-b_1)\left(1-\frac{fx}{q}\right)\varphi(x)+\frac{q(1-fx)}{\varphi\big(\frac{x}{q}\big)}V\left(\frac{x}{q}\right)\right\}y(x), \nonumber\\
L_2(x)=w_1x(1-fx)\o{y}(x)-A_1(x)y(qx)+(x-b_1)\varphi(x)y(x),\nonumber\\
L_3(x)=\o{w}_1\frac{x}{q}\left(1-\o{f}\frac{x}{q}\right)y(x)+(x-a_1)\varphi\left(\frac{x}{q}\right)\o{y}(x)-qc_1^2B_1\left(\frac{x}{q}\right) \o{y}\left(\frac{x}{q}\right),\label{eq:E7T1_L}
\end{gather}
where $\varphi$ is given by (\ref{eq:E7T1_phi}) and
\begin{gather}
A(x)=\prod_{i=1}^{4}(x-a_i), \qquad B(x)=\prod_{i=1}^{4}(x-b_i), \nonumber\\
V(x)=qc_1^2A_1(x)B_1(x)-w_1\o{w}_1x^2(1-fx)(1-\o{f}x).\label{eq:E7T1_ABV}
\end{gather}
Then we have
\begin{Proposition}\label{prop:E7T1_com}
 The compatibility condition of the $L_1$ and $L_2$ equations \eqref{eq:E7T1_L} is equivalent to the $q$-Painlev\'e type system \eqref{eq:E7T1_ev}.
\end{Proposition}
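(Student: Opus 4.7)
The plan is to derive the claim as a reduction of the $q$-Garnier compatibility theorem, Proposition~\ref{prop:Gar_com}. Substituting (\ref{eq:E7T1_cons})--(\ref{eq:E7T1_rn}) into (\ref{eq:Gar_ABFG}) yields at once $F(x)=w_1 x(1-fx)$ and $G(x)=\varphi(x)$, so the general scalar Lax triple (\ref{eq:Gar_L}) specialises precisely to (\ref{eq:E7T1_L}). Consequently the compatibility of the reduced pair $L_1=L_2=0$ is nothing but the restriction of the $q$-Garnier evolution (\ref{eq:Gar_ev}) to the reduced variables, and the task becomes to show that this restriction is equivalent to (\ref{eq:E7T1_ev}).

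First I would verify that the reduction is preserved by the evolution $T_1$. The third row of (\ref{eq:Gar_ev}) reads $f_3\o{f}_3=q(g_2-c_1)(g_2-c_2)$ and $f_0\o{f}_0=a_1b_1\bigl(g_0-\frac{d_1}{a_1b_1}A(0)\bigr)\bigl(g_0-\frac{d_2}{a_1b_1}A(0)\bigr)$; under the constraints $c_1=c_2$, $d_1=d_2$, $g_2=c_1$ and $g_0=e_1=d_1A(0)/(a_1b_1)$, both right-hand sides vanish and so $f_0=f_3=0$ propagate. The pinning $g_0=e_1$, $g_2=c_1$ is consistent with $T_1$ because $T_1 c_1=c_1$ and $T_1 e_1=qe_1$, which matches the shifts imposed by the way $g_0$ and $g_2$ enter $G(x)$ and $\underline{G}(x)$.

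Next I would evaluate the remaining two lines of (\ref{eq:Gar_ev}) at the roots of the reduced $F$ and $G$. The polynomial $F(x)=w_1 x(1-fx)$ vanishes at $x=0$ and $x=1/f$. The identity $G(x)\u{G}(x)=c_1^2 A_1(x)B_1(x)/((x-a_1)(x-b_1))$ evaluated at $x=0$ collapses to $e_1^2/q=c_1^2 a_2a_3a_4b_2b_3b_4/(a_1b_1)$, which holds identically by virtue of the parameter constraint $\prod_{i=1}^{4}a_i/b_i=qc_1^2/d_1^2$; at $x=1/f$, after clearing the overall $f^{-4}$, the same identity becomes exactly the first equation of (\ref{eq:E7T1_ev}). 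Similarly, the roots $x_1, x_2$ of $G(x)=\varphi(x)$ satisfy $x_1 x_2=e_1/c_1$ and $x_1+x_2=-e_1 g/c_1$; evaluating $F(x)\o{F}(x)=qc_1^2 A_1(x)B_1(x)$ at $x=x_1$ and $x=x_2$ produces two scalar relations whose ratio cancels the gauge factor $w_1\o{w}_1$ and reproduces the second equation of (\ref{eq:E7T1_ev}).

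The main point to watch is that the reduced pair does not secretly impose extra constraints on the dynamical variables $(f,g)$ beyond (\ref{eq:E7T1_ev}). This is controlled by the observations above: the boundary relations in the third row of (\ref{eq:Gar_ev}) and the evaluation at the spurious root $x=0$ are tautological under the reduction, while the product (as opposed to the ratio) of the two $F\o{F}$ evaluations on the roots of $\varphi$ determines only the gauge datum $w_1\o{w}_1$, which is not part of the $q$-Painlev\'e type variables. Thus the full compatibility content of (\ref{eq:E7T1_L}) is exhausted by (\ref{eq:E7T1_ev}).
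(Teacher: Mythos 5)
Your proposal is correct and takes essentially the same route as the paper: the paper's proof of Proposition~\ref{prop:E7T1_com} is precisely the one-line appeal to Proposition~\ref{prop:Gar_com} together with the reduction conditions (\ref{eq:E7T1_cons}) and (\ref{eq:E7T1_rn}), i.e., the reduction argument you carry out. Your extra verifications — the tautology of the boundary relations and of the evaluation at $x=0$ under the constraint $\prod_{i=1}^{4}a_i/b_i=qc_1^2/d_1^2$, the evaluation at $x=1/f$ giving the first equation of (\ref{eq:E7T1_ev}), and the two relations at the roots of $\varphi$ whose ratio gives the second equation while their remaining content only fixes the gauge $w_1\o{w}_1$ — correctly fill in details the paper leaves implicit (they are the relations (\ref{eq:E7T1_w1})--(\ref{eq:E7T1_w2}) used later in Section~\ref{subsec:L1_E7T1}).
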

\begin{proof}
Thanks to Proposition \ref{prop:Gar_com} and the conditions (\ref{eq:E7T1_cons}), (\ref{eq:E7T1_rn}).
\end{proof}

The pair of $L_1$ and $L_2$ equations \eqref{eq:E7T1_L} is regard as the Lax pair for the system (\ref{eq:E7T1_ev}).

\section[Characterization of the $L_1$ equation]{Characterization of the $\boldsymbol{L_1}$ equation}\label{sec:L1}

In \cite{Yamada11} Y.~Yamada has formulated a Lax form for the $q$-Painlev\'e equation of type $E_7^{(1)}$ as the linear equation (say $L_1=0$) and its deformed equation. Our direction $T_1$ (\ref{eq:E7T1_shift}) is dif\/ferent from Yamada's one. In general the $L_1$ equation is expressed in terms of dif\/ferent dependent variables according to several deformation directions. In this section, from the viewpoint of coordinates of dependent variables, we study a connection between our $L_1$ equation~(\ref{eq:E7T1_L}) and Yamada's $L_1$ equation.

\subsection[Case of our $L_1$ equation]{Case of our $\boldsymbol{L_1}$ equation}\label{subsec:L1_E7T1}
We consider characterizing our $L_1$ equation (\ref{eq:E7T1_L}) in the coordinates $(f,g) \in \mathbb{P}^1 \times \mathbb{P}^1$. The compatibility of $L_2$ and $L_3$ equations (\ref{eq:E7T1_L}) gives the f\/irst equation of (\ref{eq:E7T1_ev}) and two relations
\begin{gather}\label{eq:E7T1_w1}
w_1\o{w}_1x_1^2(1-fx_1)(1-\o{f}x_1)=qc_1^2A_1(x_1)B_1(x_1),\\
\label{eq:E7T1_w2} w_1\o{w}_1x_2^2(1-fx_2)(1-\o{f}x_2)=qc_1^2A_1(x_2)B_1(x_2).
\end{gather}
The relations (\ref{eq:E7T1_w1}) and (\ref{eq:E7T1_w2}) give the second equation of (\ref{eq:E7T1_ev}). Eliminating $\o{f}$, $w_1$ and $\o{w}_1$ from the expression $L_1$ (\ref{eq:E7T1_L}) by using the f\/irst equation of (\ref{eq:E7T1_ev}) and the relation (\ref{eq:E7T1_w1}), the expres\-sion~$L_1$~(\ref{eq:E7T1_L}) is rewritten in terms of variables $f$ and $x_1$ as the expression
\begin{gather}
L_1(x)=A(x)\left(1-f\frac{x}{q}\right)\left[y(qx)-\frac{(x-b_1)\varphi(x)}{A_1(x)}y(x)\right]\nonumber\\
\hphantom{L_1(x)=}{}+q^2c_1^2B\left(\frac{x}{q}\right)(1-fx)\left[y(\frac{x}{q})-\frac{A_1\big(\frac{x}{q}\big)}{\big(\frac{x}{q}-b_1\big) \varphi\big(\frac{x}{q}\big)}\right]\label{eq:E7T1_L1}\\
\hphantom{L_1(x)=}{} +\frac{c_1^2x^2(1-fx)\big(1-f\frac{x}{q}\big)}{(x_1-x_2)\varphi\big(\frac{x}{q}\big)}\!
\left[\frac{\big(\frac{x}{q}-x_2\big)A_1(x_1)B_1(x_1)}{x_1^2(1-fx_1)}-\frac{\big(\frac{x}{q}-x_1\big)A_1(x_2)B_1(x_2)}{x_2^2(1-fx_2)}\right]\!y(x),
\nonumber
\end{gather}
where $x_2=\frac{e_1}{c_1x_1}$. Next when we set the expression $L_1^*$ by
\begin{gather}\label{eq:E7T1_L1*}
L_1^*(x)=(1-fx_1)(1-fx_2)L_1(x).
\end{gather}
Then we have:
\begin{Proposition}\label{prop:E7T1_L1*}
The $L_1^*$ equation \eqref{eq:E7T1_L1*} has the following characterization\footnote{For other cases, see \cite{KNY17,Yamada09-2} (case $e$-$E_8^{(1)}$), \cite{KNY17,Yamada11} (case $q$-$E_8^{(1)}$).}:
\begin{enumerate}\itemsep=0pt
\item[$(i)$] The expression $L_1^*(f,g)$ is a polynomial of bi-degree $(3,2)$ in the coordinates $(f,g) \in \mathbb{P}^1 \times \mathbb{P}^1$.
\item[$(ii)$] As a polynomial, the expression $L_1^*(f,g)$ vanishes at the following $12$ points $(f_s, g_s) \in \mathbb{P}^1 \times \mathbb{P}^1$ $(s=1,\ldots,12)$:
\begin{gather} \left(\frac{1}{a_1}, \infty\right),\quad \left(\frac{1}{b_1}, \infty\right), \quad \left(\frac{1}{a_i}, -\frac{1}{a_i}-\frac{a_ic_1}{e_1}\right)_{i=2,3,4}, \quad \left(\frac{1}{b_i}, -\frac{1}{b_i}-\frac{b_ic_1}{e_1}\right)_{i=2,3,4},
\nonumber\\
 \left(\frac{q}{x}, \infty\right),\quad \left(\frac{1}{x}, -\frac{1}{x}-\frac{c_1x}{e_1}\right), \quad \left(\frac{1}{x}, g_{\frac{1}{x}}\right), \quad \left(\frac{q}{x}, g_{\frac{q}{x}}\right),\label{eq:E7T1_12p}
\end{gather}
where the first $8$ points are as in \eqref{eq:E7T1_8p} and $g_u$ is given by
\begin{gather*}
\frac{y\big(\frac{q}{u}\big)}{y\big(\frac{1}{u}\big)}=:\frac{\big(\frac{1}{u}-b_1\big)\big(e_1+\frac{e_1g_u}{u}+\frac{c_1}{u^2}\big)} {A_1\big(\frac{1}{u}\big)}, \qquad u=\frac{1}{x}, \frac{q}{x}.
\end{gather*}
\end{enumerate}
Conversely the $L_1^*$ equation is uniquely characterized by these properties $(i)$ and $(ii)$.
\end{Proposition}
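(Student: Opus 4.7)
The plan is to establish (i) by a degree analysis of the explicit expression (\ref{eq:E7T1_L1}), establish (ii) by direct substitution at each of the twelve points after multiplying through by $(1-fx_1)(1-fx_2)$, and finally deduce the converse by a linear-algebra dimension count.

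For (i), the key identity is that $(1-fx_1)(1-fx_2)$ is a symmetric polynomial in $x_1,x_2$. Since $x_1,x_2$ are the roots of $\varphi(x)=c_1x^2+e_1gx+e_1$, Vieta's formulas give $x_1+x_2=-e_1g/c_1$ and $x_1x_2=e_1/c_1$, so
\begin{gather*}
(1-fx_1)(1-fx_2)=1+\frac{e_1}{c_1}gf+\frac{e_1}{c_1}f^2,
\end{gather*}
a polynomial of bi-degree $(2,1)$ in $(f,g)$. I would then inspect the three lines of (\ref{eq:E7T1_L1}) separately. The first two lines are manifestly linear in $f$ through the factor $(1-fx/q)$ or $(1-fx)$, and linear in $g$ through $\varphi$; the bracketed differences are purely $x$-dependent combinations of the shifted $y$'s and carry no extra $f,g$-dependence. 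The third line has a prefactor $1/[(x_1-x_2)\varphi(x/q)]$ times a bracket antisymmetric in $x_1\leftrightarrow x_2$; after dividing by $(x_1-x_2)$ the remaining expression is symmetric, hence a polynomial in $x_1+x_2$ and $x_1x_2$, i.e.\ in~$g$. Multiplying by $(1-fx_1)(1-fx_2)$ clears the $1/(1-fx_i)$ factors in the last bracket, and a careful tally shows the total $f$-degree caps at $3$ and the $g$-degree at~$2$.

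For (ii), I would treat the three groups of points separately. At $(1/a_i, -1/a_i-a_ic_1/e_1)$ for $i=2,3,4$, the listed $g$ is exactly the value making $\varphi(1/a_i)=0$, so the coefficient of $y(x)$ in the first line of (\ref{eq:E7T1_L1}) vanishes (via $A(x)$ coupled to $\varphi$) as do the corresponding symmetric combinations in the third line; similarly for the $b_i$-points. At $(1/a_1,\infty)$ and $(1/b_1,\infty)$, after extracting the leading $g^2$-contribution from $(1-fx_1)(1-fx_2)$, the surviving coefficient carries the factor $(1-a_1f)$, respectively $(1-b_1f)$, coming from $A(x)/A_1(x)$ in the first line. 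The four $x$-dependent points come from the three shift arguments: $(q/x,\infty)$ and $(1/x,-1/x-c_1x/e_1)$ are the explicit zeros of the $f$-linear factors $(1-fx/q)$ and $(1-fx)$ (combined with a $\varphi$-factor at the relevant $g$-value), while vanishing at $(1/x,g_{1/x})$ and $(q/x,g_{q/x})$ follows because these $g$-values are, by the definition of $g_u$, precisely the conditions that make the coefficients of $y(x/q)$, $y(x)$ and $y(qx)$ cancel when $f$ is set to $1/x$ or $q/x$.

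For the converse, a bi-homogeneous polynomial of bi-degree $(3,2)$ in $(f,g)$ has $(3+1)(2+1)=12$ coefficients. Imposing vanishing at the $12$ points gives the same number of linear conditions, which for generic values of $x$ and of the parameters are independent. Hence the space of polynomials satisfying (i) and (ii) is one-dimensional, so $L_1^*(x)$ is uniquely determined up to an overall normalization that depends only on $x$, $q$, and the parameters.

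The main obstacle I anticipate is verifying (ii) at the two points $(1/x,g_{1/x})$ and $(q/x,g_{q/x})$: because $g_u$ is defined implicitly by a ratio of $y$-values, one must carefully combine the definition of $g_u$ with the first line of (\ref{eq:E7T1_L1}) and with the first equation of (\ref{eq:E7T1_ev}) to reduce the substitution to an identity. The degree count in (i) for the third line is the second delicate step, since the antisymmetric cancellation and the subsequent rewriting in terms of $g$ via the elementary symmetric functions of $x_1,x_2$ must be tracked carefully to confirm that no spurious higher powers of $f$ or $g$ survive.
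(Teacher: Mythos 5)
Your route for (i) and (ii) is essentially the paper's: multiply $L_1$ by $(1-fx_1)(1-fx_2)$, exploit the (anti)symmetry under $x_1\leftrightarrow x_2$ together with Vieta's relations $x_1+x_2=-\frac{e_1g}{c_1}$, $x_1x_2=\frac{e_1}{c_1}$ to convert symmetric expressions into polynomials in $g$, and then check (ii) by direct substitution. However, your degree analysis has a genuine gap exactly where the paper has to do real work: the coefficient of $y(x)$ in \eqref{eq:E7T1_L1} carries the factor $1/\varphi\big(\frac{x}{q}\big)$, both inside the bracket of the second line and in the prefactor of the third line, and $\varphi\big(\frac{x}{q}\big)$ is linear in $g$, so this is a pole in $g$. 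It is not removed by multiplying by $(1-fx_1)(1-fx_2)=1+\frac{e_1}{c_1}gf+\frac{e_1}{c_1}f^2$, which is unrelated to $\varphi\big(\frac{x}{q}\big)$. Consequently your assertion that the bracketed differences of the first two lines ``carry no extra $f,g$-dependence'' is false (they contain $\varphi(x)$ and $1/\varphi\big(\frac{x}{q}\big)$), and symmetrizing the third line alone does not produce a polynomial in $g$. The missing step, which the paper supplies, is to collect all terms with denominator $\varphi\big(\frac{x}{q}\big)$ into a single antisymmetric expression $Q(x)$ and to verify that $Q$ vanishes at $x=qx_1$ and $x=qx_2$, i.e.\ at the zeros of $\varphi\big(\frac{x}{q}\big)$; only then is $Q/\big[(x_1-x_2)\varphi\big(\frac{x}{q}\big)\big]$ symmetric and regular, hence expressible through $x_1+x_2$, and only then does the bi-degree $(3,2)$ bound for the $y(x)$-coefficient follow.

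Your converse is also internally inconsistent as stated. A bi-degree $(3,2)$ polynomial has $12$ coefficients, so if the twelve vanishing conditions were linearly independent the only solution would be the zero polynomial, contradicting the existence of the nonzero $L_1^*$ you have just exhibited. The correct reading is that the existence of $L_1^*$ forces the rank of the system to be at most $11$, and uniqueness up to an overall ($x$- and parameter-dependent) factor amounts to showing the rank equals $11$ for generic $x$ and parameters, e.g.\ by exhibiting eleven independent conditions among the twelve. Note also that the paper's own proof establishes only (i) and (ii) and is silent on the uniqueness claim, so if you include this part you must repair the dimension count rather than appeal to generic independence of all twelve conditions.
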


\begin{proof}
By the expression $L_1$ (\ref{eq:E7T1_L1}), the expression $L_1^*$ is rewritten in terms of variables $f$ and~$x_1$ as follows
\begin{gather}
 L_1^*(x)=A(x)(1-fx_1)(1-fx_2)\left(1-f\frac{x}{q}\right)y(qx)\nonumber\\
 \hphantom{L_1^*(x)=}{}+q^2c_1^2(1-fx_1)(1-fx_2)(1-fx)y\left(\frac{x}{q}\right)+P(x)y(x),\label{eq:E7T1_LL1*}
\end{gather}
where
\begin{gather*}
 P(x)=\frac{q^2c_1^2(1-fx)\big(1-f\frac{x}{q}\big)\big(\frac{x}{q}\big)^2}{(x_1-x_2)\varphi\big(\frac{x}{q}\big)}Q(x)
\nonumber\\ \hphantom{P(x)=}{}-(x-a_1)(x-b_1)(1-fx_1)(1-fx_2)\left(1-f\frac{x}{q}\right)\varphi(x), \nonumber\\
Q(x)=\frac{(1-fx_2)\big(\frac{x}{q}-x_2\big)A_1(x_1)B_1(x_1)}{x_1^2}-\frac{(1-fx_1)\big(\frac{x}{q}-x_1\big)A_1(x_2)B_1(x_2)}{x_2^2}
\nonumber\\ \phantom{Q(x)=}{}-\frac{(x_1-x_2)(1-fx_1)(1-f x_2)A_1\big(\frac{x}{q}\big)B_1\big(\frac{x}{q}\big)}{\big(1-f\frac{x}{q}\big)\big(\frac{x}{q}\big)^2},
\end{gather*}
and $x_1$, $x_2\big({=}\frac{e_1}{c_1x_1}\big)$ are as in Section~\ref{subsec:Lax_E7T1}. Similarly to the proof of Proposition \ref{prop:BC}, the expression~$Q(x)$ is given as a Laurent polynomial $(x_1-x_2)\sum\limits_{i=0}^{3}k_i(x_1+x_2)^i$ where $k_i$ depends on $f$, $a_i$, $b_i$, $c_1$ and~$d_1$. The expression $Q(x)$ has zeros at $x=qx_1, qx_2$ which are solutions of the equation $\varphi\big(\frac{x}{q}\big)=0$. Therefore, due to the relation $x_1+x_2=-\frac{e_1g}{c_1}$, the expression $P(x)$ (i.e., the coef\/f\/icient of $y(x)$) is a polynomial of bi-degree $(3,2)$ in $(f,g)$. It is obvious that the coef\/f\/icients of $y(qx)$ and $y\big(\frac{x}{q}\big)$ are polynomials of bi-degree $(3,1)$ in $(f,g)$. Hence the property (i) is completely proved. Next the property (ii) can be easily conf\/irmed by substituting the 12~points~(\ref{eq:E7T1_12p}) into the expression~$L_1^*$~(\ref{eq:E7T1_LL1*}).
\end{proof}

\subsection[Case of Yamada's $L_1$ equation]{Case of Yamada's $\boldsymbol{L_1}$ equation}\label{subsec:L1_E7T2}
Firstly we recall the well-known $q$-$E_7^{(1)}$ system \cite{GR99} and the corresponding Yamada's Lax form~\cite{Yamada11}. Next we characterize Yamada's $L_1$ equation in terms of dependent variables. The complex parameters $a_i$, $b_i$, $c_1$ and $d_1$ be as in Section~\ref{sec:E7T1} and let $(\lambda, \mu) \in\P^1 \times \P^1$ be dependent variables. Then we consider a $q$-shift operator $T_2$ \footnote{The direction $T_2$ is given by a composition of the fundamental ones such as $T_1$ (\ref{eq:E7T1_shift}) and $T_{a_2}^{-1}T_{b_2}^{-1}$.} as
\begin{gather}\label{eq:E7T2_shift}
T_2=T_{a_1}^{-1}T_{a_2}^{-1}T_{b_1}^{-1}T_{b_2}^{-1}.
\end{gather}
The operator $T_2$ plays the role of the evolution of the $q$-$E_7^{(1)}$ system. The system is well-known as the bi-rational transformation $T_2^{-1}(\mu)=\u{\mu}(\lambda,\mu)$ and $T_2(\lambda)=\o{\lambda}(\lambda,\mu)$ in $\mathbb{P}^1 \times \mathbb{P}^1$ as follows \cite{GR99, KMNOY04, KNY17, Sakai01}
\begin{gather}
\frac{\big(\lambda\mu-\frac{e_2}{c_1}\big)\big(\lambda\u{\mu}-\frac{e_2}{qc_1}\big)}{(\lambda\mu-1)(\lambda\u{\mu}-1)}
=\frac{e_2^2}{qc_1^2}\frac{\prod\limits_{i=1,2}(1-a_i\lambda)(1-b_i\lambda)}{\prod\limits_{i=3,4}(1-a_i\lambda)(1-b_i\lambda)},
\nonumber\\
\frac{\big(\lambda\mu-\frac{e_2}{c_1}\big)\big(\o{\lambda}\mu-\frac{qe_2}{c_1}\big)}{(\lambda\mu-1)(\o{\lambda}\mu-1)}
=\frac{\prod\limits_{i=1,2}\big(\mu-\frac{a_ie_2}{c_1}\big)\big(\mu-\frac{b_ie_2}{c_1}\big)}{\prod\limits_{i=3,4}(\mu-a_i)(\mu-b_i)},\label{eq:E7T2_ev}
\end{gather}
where $e_2=\frac{a_3a_4d_1}{b_1b_2}$. Here eight singular points $(\lambda_s,\mu_s)$ $(s=1,\ldots,8)$ in coordinates $(\lambda,\mu)$ are on two curves $\lambda\mu=1$ and $\lambda\mu=\frac{e_2}{c_1}$ as follows
\begin{gather}\label{eq:E7T2_8p}
\left(\frac{1}{a_i}, \frac{a_ie_2}{c_1}\right)_{i=1,2},\quad \left(\frac{1}{b_i}, \frac{b_ie_2}{c_1}\right)_{i=1,2}, \quad \left(\frac{1}{a_i}, a_i\right)_{i=3,4},\quad \left(\frac{1}{b_i}, b_i\right)_{i=3,4}.
\end{gather}
The following scalar Lax equations
\begin{gather}
L_1(x)=\frac{A(x)}{1-\lambda x}\left[y(qx)-\frac{e_2\prod\limits_{i=1,2}(x-b_i)(\mu-x)}{\prod\limits_{i=3,4}(x-a_i)\big(\mu-\frac{e_2x}{c_1}\big)}y(x)\right]
\nonumber\\ \hphantom{L_1(x)=}{}
+\frac{q^2c_1^2B\big(\frac{x}{q}\big)}{1-\lambda\frac{x}{q}}\left[y\left(\frac{x}{q}\right)-\frac{\prod\limits_{i=3,4} \big(\frac{x}{q}-a_i\big)\big(\mu-\frac{e_2x}{qc_1}\big)}{e_2\prod\limits_{i=1,2}\big(\frac{x}{q}-b_i\big)\big(\mu-\frac{x}{q}\big)}y(x)\right]
\nonumber\\ \hphantom{L_1(x)=}
{}+\frac{c_1\big(1-\frac{c_1}{e_2}\big)x^2}{\mu}\left[\frac{\prod\limits_{i=3,4}(\mu-a_i)(\mu-b_i)}{(\lambda\mu-1)\big(\mu-\frac{x}{q}\big)} -\frac{\prod\limits_{i=1,2}\big(\mu-\frac{a_ie_2}{c_1}\big)\big(\mu-\frac{b_ie_2}{c_1}\big)}{\big(\lambda\mu-\frac{e_2}{c_1}\big)
\big(\mu-\frac{e_2x}{c_1}\big)}\right]y(x),
\label{eq:E7T2_L}\\
\nonumber L_2(x)=w_2x(1-\lambda x)\o{y}(x)-\prod_{i=3}^4(x-a_i)\left(1-\frac{e_2x}{c_1\mu}\right)y(qx)+e_2\prod_{i=1}^2(x-b_i)\left(1-\frac{x}{\mu}\right)y(z)
\end{gather}
are equivalent to those in \cite{KNY17, Nagao15,Yamada11} up to a gauge transformation of $y(x)$. Here $A$ and $B$ are as in (\ref{eq:E7T1_ABV}) and $w_2$ is a gauge freedom (as mentioned in Section~\ref{subsec:Lax_E7T1}). Then the compatibility of the~$L_1$ and~$L_2$ equations (\ref{eq:E7T2_L}) is equivalent to the system (\ref{eq:E7T2_ev}). Next setting the expres\-sion~$L_1^*$~by
\begin{gather}\label{eq:E7T2_L1*}
L_1^*(x)=(1-\lambda x)\left(1-\lambda\frac{x}{q}\right)(\lambda\mu-1)\left(\lambda\mu-\frac{e_2}{c_1}\right)L_1(x),
\end{gather}
we have:
\begin{Proposition}\label{prop:E7T2_L1*}
The $L_1^*$ equation (\ref{eq:E7T2_L1*}) has the following characterization:
\begin{enumerate}\itemsep=0pt
\item[$(i)$] The expression $L_1^*$ is a polynomial of bi-degree $(3,2)$ in the coordinates $(\lambda,\mu) \in \mathbb{P}^1 \times \mathbb{P}^1$.
\item[$(ii)$] As a polynomial, the expression $L_1^*$ vanishes at the following $12$ points $(\lambda_s, \mu_s) \in \mathbb{P}^1 \times \mathbb{P}^1$ $(s=1,\ldots,12)$:
\begin{gather}
 \left(\frac{1}{a_i}, \frac{a_ie_2}{c_1}\right)_{i=1,2},\quad \left(\frac{1}{b_i}, \frac{b_ie_2}{c_1}\right)_{i=1,2}, \quad \left(\frac{1}{a_i}, a_i\right)_{i=3,4},\quad \left(\frac{1}{b_i}, b_i\right)_{i=3,4}, \nonumber\\
\left(\frac{1}{x}, x\right),\quad \left(\frac{q}{x}, \frac{e_2x}{qc_1}\right),\quad \left(\frac{1}{x}, \mu_{\frac{1}{x}}\right), \quad \left(\frac{q}{x}, \mu_{\frac{q}{x}}\right),\label{eq:E7T2_12p}
\end{gather}
where the first $8$ points are as in \eqref{eq:E7T2_8p} and $\mu_u$ is given by
\begin{gather*}
\frac{Y(\frac{q}{u})}{Y(\frac{1}{u})}=:\frac{e_2\prod\limits_{i=1,2}\big(\frac{1}{u}-b_i\big) \big(\mu_u-\frac{1}{u}\big)}{\prod\limits_{i=3,4}\big(\frac{1}{u}-a_i\big)\big(\mu_u-\frac{e_2}{c_1u}\big)}, \qquad u=\frac{1}{x}, \frac{q}{x}.
\end{gather*}
\end{enumerate}
Conversely the equation $L_1^*$ is uniquely characterized by these properties $(i)$ and $(ii)$.
\end{Proposition}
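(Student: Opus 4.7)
The plan is to imitate the proof of Proposition~\ref{prop:E7T1_L1*}. First I would verify the polynomial structure and the bi-degree claim in~(i) by showing that the prefactor $(1-\lambda x)(1-\lambda x/q)(\lambda\mu-1)(\lambda\mu-e_2/c_1)$ appearing in \eqref{eq:E7T2_L1*} is precisely what is needed to convert the rational expression \eqref{eq:E7T2_L} into a polynomial in $(\lambda,\mu)$. The four denominators $1-\lambda x$, $1-\lambda x/q$, $\lambda\mu-1$ and $\lambda\mu-e_2/c_1$ of $L_1(x)$ are cleared literally; the remaining apparent denominators $\mu-x/q$ and $\mu-e_2x/c_1$ inside the coefficient of $y(x)$ are removable singularities of the bracket, since the two numerators there agree at $\mu=x/q$ and at $\mu=e_2x/c_1$, and the $1/\mu$ factor is absorbed analogously. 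Collecting terms, the coefficient of $y(qx)$ becomes $A(x)(1-\lambda x/q)(\lambda\mu-1)(\lambda\mu-e_2/c_1)$, which is of bi-degree $(3,2)$; the coefficient of $y(x/q)$ is of the same bi-degree, and a routine degree count shows that the coefficient of $y(x)$ is also of bi-degree $(3,2)$, proving~(i).

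For~(ii) I would split the twelve vanishings into two groups. At each of the first eight singular points \eqref{eq:E7T2_8p}, exactly one factor of the prefactor vanishes, for example $\lambda\mu-e_2/c_1=0$ at $(1/a_i,a_ie_2/c_1)$ for $i=1,2$ and $\lambda\mu-1=0$ at $(1/a_i,a_i)$ for $i=3,4$, and a substitution confirms that the remaining piece of $L_1(x)$ is regular there, so the product vanishes. The four $x$-dependent points split further: at $(1/x,x)$ and $(q/x,e_2x/(qc_1))$ two factors of the prefactor vanish simultaneously ($1-\lambda x$ with $\lambda\mu-1$ in the first case, $1-\lambda x/q$ with $\lambda\mu-e_2/c_1$ in the second), so $L_1^*$ vanishes trivially. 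The subtler points are $(1/x,\mu_{1/x})$ and $(q/x,\mu_{q/x})$, where only one prefactor factor vanishes; here the vanishing relies on the defining relation for $\mu_u$, which is precisely the condition that makes the first (respectively second) bracket in \eqref{eq:E7T2_L} vanish, thereby cancelling the pole $1/(1-\lambda x)$ (respectively $1/(1-\lambda x/q)$) that is not killed by the prefactor.

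For uniqueness, the space of bi-degree $(3,2)$ polynomials in $(\lambda,\mu)$ has dimension $(3+1)(2+1)=12$; imposing the twelve linear vanishing conditions leaves a one-dimensional projective solution, which pins down $L_1^*$ up to overall scalar. I expect the main technical obstacle to be the analysis at the two spectral points $(1/x,\mu_{1/x})$ and $(q/x,\mu_{q/x})$, since there the vanishing is not a mere cancellation of prefactor factors but requires invoking the implicit definition of $\mu_u$ through the $y$-ratio and checking that the $0/0$ limit is handled correctly; establishing the generic independence of the twelve linear conditions, so that the dimension count actually delivers uniqueness, is a secondary but necessary bookkeeping step.
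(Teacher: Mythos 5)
Your overall route is the same as the paper's: the paper disposes of this proposition by declaring it "similar" to Proposition~\ref{prop:E7T1_L1*}, i.e., multiply by the prefactor in \eqref{eq:E7T2_L1*}, check the bi-degree of the coefficients of $y(qx)$, $y(x/q)$, $y(x)$, and verify the twelve vanishings \eqref{eq:E7T2_12p} by substitution; your treatment of (i) and (ii) follows exactly that line. Two details are misstated, though the conclusions survive a careful computation. First, the poles at $\mu=x/q$ and $\mu=e_2x/c_1$ are not removable "within the bracket" of the third term of \eqref{eq:E7T2_L}: the residue at $\mu=x/q$ cancels against the $y(x)$-part of the second term, the residue at $\mu=e_2x/c_1$ against the $y(x)$-part of the first term, and the $1/\mu$ pole disappears only by virtue of the parameter constraint $\prod_{i=1}^4 a_i/b_i=qc_1^2/d_1^2$. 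Second, at the eight points \eqref{eq:E7T2_8p} the prefactor factor that vanishes (e.g.\ $\lambda\mu-e_2/c_1$) is simultaneously a pole divisor of $L_1$, so the vanishing of $L_1^*$ there is not "zero of the prefactor times a regular value"; it also uses the zeros of the numerators $\prod_{i=1,2}(\mu-a_ie_2/c_1)(\mu-b_ie_2/c_1)$, resp.\ $\prod_{i=3,4}(\mu-a_i)(\mu-b_i)$, at those particular $\mu$-values. Your "substitution confirms" step would surface both points, so these are repairable imprecisions rather than gaps.

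The genuine gap is in the uniqueness paragraph, which is internally inconsistent as written. The space of bi-degree $(3,2)$ polynomials indeed has dimension $12$, but twelve \emph{independent} linear vanishing conditions on a $12$-dimensional space leave only the zero polynomial, not a one-dimensional projective family; so "establishing the generic independence of the twelve linear conditions" would contradict the existence of the nonzero $L_1^*$ you have just verified in (ii). The correct statement is that existence forces the rank of the twelve conditions to be at most $11$, and uniqueness up to scalar is the assertion that the rank equals exactly $11$; what must be excluded is a solution space of dimension two or more, and this cannot follow from genericity of the points (they are manifestly non-generic: eight of them lie on the two $(1,1)$-curves $\lambda\mu=1$ and $\lambda\mu=e_2/c_1$). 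One has to exploit the actual configuration, e.g.\ by an explicit rank computation or by showing that a second independent solution could be combined with $L_1^*$ to produce a $(3,2)$-polynomial vanishing on a curve forced to contain too many points of a low-degree component, and then deriving a contradiction. The paper itself never spells out the converse for either proposition, so you are not being held to a higher standard than the text; but since your proposal does attempt the converse, the dimension count needs to be corrected in this way for the argument to stand.
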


\begin{proof}This proof is the similar as for Proposition \ref{prop:E7T1_L1*}.
\end{proof}

\subsection[Correspondence between two $L_1$ equations]{Correspondence between two $\boldsymbol{L_1}$ equations}\label{subsec:L1_Corres}

Thanks to Propositions \ref{prop:E7T1_L1*} and \ref{prop:E7T2_L1*}, we have:

\begin{Theorem}\label{thm:L1_Corres}
The $L_1$ equation \eqref{eq:E7T1_L1} is equivalent to Yamada's $L_1$ equation \eqref{eq:E7T2_L} under a~relation
\begin{gather}\label{eq:L1_Corres}
 f=\lambda, \qquad \frac{\big(\lambda\mu-\frac{e_2}{c_1}\big)\big(e_1f^2+e_1fg+c_1\big)}{e_2(1-b_2\lambda)(\lambda\mu-1)(1-a_2f)}=1.
\end{gather}
\end{Theorem}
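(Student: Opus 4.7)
The plan is to invoke the uniqueness parts of Propositions \ref{prop:E7T1_L1*} and \ref{prop:E7T2_L1*}, which say that each $L_1^*$ is the unique (up to an overall scalar) bi-degree $(3,2)$ polynomial in its own pair of coordinates vanishing at the prescribed $12$ points. It then suffices to show that the relation \eqref{eq:L1_Corres} defines a birational change of coordinates carrying Yamada's $12$-point list \eqref{eq:E7T2_12p} onto our $12$-point list \eqref{eq:E7T1_12p}, in such a way that Yamada's $L_1^*$ pulls back to a bi-degree $(3,2)$ polynomial in $(f,g)$.

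First I would set $f=\lambda$ and solve the second relation of \eqref{eq:L1_Corres} for $\lambda\mu$. Writing $\alpha:=e_1 f^2+e_1 fg+c_1$ and $\beta:=e_2(1-a_2 f)(1-b_2 f)$ (note $\beta$ is independent of $g$), the relation reads $\alpha(\lambda\mu-e_2/c_1)=\beta(\lambda\mu-1)$, whence
\begin{gather*}
\lambda\mu=\frac{(e_2/c_1)\alpha-\beta}{\alpha-\beta},\qquad \lambda\mu-1=\frac{(e_2/c_1-1)\alpha}{\alpha-\beta},\qquad \lambda\mu-\frac{e_2}{c_1}=\frac{(e_2/c_1-1)\beta}{\alpha-\beta}.
\end{gather*}
These formulas give $\mu$ as an explicit rational function of $(f,g)$, and conversely $g$ as a rational function of $(\lambda,\mu)$, so the correspondence is indeed birational.

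Next I would verify the matching of the singular points. The two points $(1/a_1,\infty),(1/b_1,\infty)$ of \eqref{eq:E7T1_8p}, where $\alpha=\infty$, are sent to the points $(1/a_1,a_1 e_2/c_1),(1/b_1,b_1 e_2/c_1)$ on Yamada's curve $\lambda\mu=e_2/c_1$. The four parabolic points at $f\in\{1/a_i,1/b_i\}_{i=3,4}$, where $\alpha=0$ but $\beta\neq 0$, are sent to the four Yamada points on $\lambda\mu=1$; and the two parabolic points at $f\in\{1/a_2,1/b_2\}$, where $\alpha$ and $\beta$ both vanish, correspond birationally to the remaining Yamada points on $\lambda\mu=e_2/c_1$. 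The four $x$-dependent points of \eqref{eq:E7T1_12p} and \eqref{eq:E7T2_12p} are matched by the analogous direct computation, supplemented by the check that the defining ratios $y(q/u)/y(1/u)$ for $g_u$ and $\mu_u$ transform into each other under \eqref{eq:L1_Corres}.

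Finally, substituting the relation into Yamada's $L_1^*$ and clearing an appropriate common power of $(\alpha-\beta)$ produces a bi-degree $(3,2)$ polynomial in $(f,g)$ vanishing at our $12$ points; by the uniqueness in Proposition \ref{prop:E7T1_L1*} this must be a constant multiple of our $L_1^*$, which is the asserted equivalence. The main obstacle will be the careful handling of the $0/0$ indeterminate forms at the parabolic points $f\in\{1/a_2,1/b_2\}$, where $\alpha$ and $\beta$ vanish simultaneously; this is resolved by keeping the cleared polynomial form of \eqref{eq:L1_Corres} throughout the substitution, so that the correspondence is a genuine birational identification off a codimension-one locus on either side.
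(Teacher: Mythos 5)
Your strategy coincides with the paper's: invoke the uniqueness characterizations of Propositions~\ref{prop:E7T1_L1*} and~\ref{prop:E7T2_L1*}, match the two $12$-point configurations under \eqref{eq:L1_Corres}, and pull Yamada's $L_1^*$ back through $\mu(f,g)$ (the paper also extracts the relation itself, as a necessary condition, from the last two $x$-dependent points of \eqref{eq:E7T1_12p} and \eqref{eq:E7T2_12p}, which is the check you mention for $g_u$ and $\mu_u$). The point-matching you sketch is consistent with the paper; note that the identity $e_1=e_2a_2b_2$ makes your $\alpha-\beta$ of bi-degree $(1,1)$ in $(f,g)$, which is what underlies the paper's statement that $\mu(f,g)$ is a ratio of bi-degree $(1,1)$ polynomials. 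The genuine gap is in your final step: clearing a ``common power of $(\alpha-\beta)$'' cannot yield a bi-degree $(3,2)$ polynomial. Substituting $\lambda=f$, $\mu=N/D$ with $N,D$ of bi-degree $(1,1)$ into the bi-degree $(3,2)$ polynomial \eqref{eq:E7T2_L1*} and clearing $D^2$ produces a polynomial of bi-degree $(5,2)$, and the excess is not a power of $D=\alpha-\beta$ (which involves $g$) but the $g$-independent factor $(1-a_2f)(1-b_2f)$ of bi-degree $(2,0)$. This factor arises because the lines $f=\frac{1}{a_2}$ and $f=\frac{1}{b_2}$ are contracted by the correspondence to Yamada's vanishing points $\big(\frac{1}{a_2},\frac{a_2e_2}{c_1}\big)$ and $\big(\frac{1}{b_2},\frac{b_2e_2}{c_1}\big)$, so the pullback vanishes identically along these two lines; the paper's proof records precisely this, namely that the $12$ points \eqref{eq:E7T2_12p} go to the $12$ points \eqref{eq:E7T1_12p} together with the two lines, and that the bi-degree $(5,2)$ pullback factors as $L_1^*(f,g)\times(1-a_2f)(1-b_2f)$.

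Relatedly, your phrase that the two parabolic points at $f=\frac{1}{a_2},\frac{1}{b_2}$ ``correspond birationally to the remaining Yamada points'' hides the real issue: these are exactly the indeterminacy points of $\mu(f,g)$ (both your $\alpha$ and $\alpha-\beta$ vanish there), so the correspondence is a blow-up/blow-down there rather than a pointwise bijection, and the vanishing of the residual bi-degree $(3,2)$ factor at those two points of \eqref{eq:E7T1_8p} does not follow automatically from the line factors. Your closing remark that the $0/0$ forms are ``resolved by keeping the cleared polynomial form'' does not supply this; the factorization above is the missing ingredient. Once it is inserted, the appeal to the uniqueness in Proposition~\ref{prop:E7T1_L1*} goes through essentially as in the paper.
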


\begin{proof}Comparing the last two points of (\ref{eq:E7T1_12p}) in Proposition \ref{prop:E7T1_L1*} with ones of (\ref{eq:E7T2_12p}) in Proposition \ref{prop:E7T2_L1*}, we obtain the transformation (\ref{eq:L1_Corres}) as a necessary condition to change the $L_1^*$ equation~(\ref{eq:E7T2_L1*}) into the $L_1^*$ equation (\ref{eq:E7T1_L1*}). Conversely, under the relation (\ref{eq:L1_Corres}), $\mu(f,g)$ is written as a rational function with the numerator and the denominator of bi-degree $(1,1)$ in $(f,g)$ respectively. Substituting the expression $\mu(f,g)$ (\ref{eq:L1_Corres}) into the $L_1^*$ equation (\ref{eq:E7T2_L1*}), it is shown that the algebraic curve $L_1^*=0$ (\ref{eq:E7T2_L1*}) of bi-degree $(3,2)$ in $(\lambda,\mu)$ changes to the algebraic curve of bi-degree~$(5,2)$ in~$(f,g)$. Furthermore, due to the relation~(\ref{eq:L1_Corres}), $12$~points~(\ref{eq:E7T2_12p}) are changed to $12$~points~(\ref{eq:E7T1_12p}) and $2$ lines $f=\frac{1}{a_2}$, $f=\frac{1}{b_2}$. Namely it turns out that the algebraic curve $L_1^*(\lambda,\mu)=0$~(\ref{eq:E7T2_L1*}) of bi-degree $(3,2)$ is changed to the algebraic curve $L_1^*(f,g) \times (1-a_2f)(1-b_2f)=0$ (\ref{eq:E7T1_L1*}) of bi-degree $(3,2) \times (2,0)$ in $(f,g)$. Hence, the relation~(\ref{eq:L1_Corres}) is proved to be the suf\/f\/icient condition that the $L_1$ equation (\ref{eq:E7T2_L}) corresponds with the $L_1$ equation~(\ref{eq:E7T1_L1}).
\end{proof}

We note that the system (\ref{eq:E7T1_ev}) has the Lax pair whose $L_1$ equation (\ref{eq:E7T1_L1}) is equivalent to that (\ref{eq:E7T2_L}) of the $q$-$E_7^{(1)}$ system (\ref{eq:E7T2_ev}) and clarify the relation (\ref{eq:L1_Corres}) of the dependent variables between the systems (\ref{eq:E7T1_ev}) and (\ref{eq:E7T2_ev}).

\section{Particular solutions}\label{sec:Ps}

In this section we recall the particular solutions for the system (\ref{eq:Gar_ev}) given in \cite{NY16}, and derive ones for the system (\ref{eq:E7T1_ev}) by the similar reduction as in Section~\ref{subsec:Lax_E7T1}.

\subsection[Case of the $q$-Garnier system]{Case of the $\boldsymbol{q}$-Garnier system}\label{subsec:Ps_Gar}
The contents are extracts from \cite[Section~5.2]{NY16}. For convenience we change the notations in Section~\ref{subsec:Lax_Gar} as follows
\begin{gather}\label{eq:Ps_sp}
\begin{bmatrix}
a_1&\ldots&a_N&a_{N+1}\\
b_1&\ldots&b_N&b_{N+1}\\
c_1&c_2&d_1&d_2
\end{bmatrix}
\mapsto
\begin{bmatrix}
\frac{1}{a_1}&\ldots&\frac{1}{a_N}&q^{m+n}\vspace{1mm}\\
\frac{1}{b_1}&\ldots&\frac{1}{b_N}&\frac{1}{q}\vspace{1mm}\\
cq^n\prod\limits_{1}^{N}\frac{b_i}{a_i}&q^m&c&1
\end{bmatrix},
\end{gather}
where $a_1, \ldots, a_N$, $b_1, \ldots, b_N$, $c$ $\in\C^{\times}$ and $m,n\in\Z_{\ge 0}$. Correspondingly we replace the nota\-tions~$A$,~$B$ etc.~(\ref{eq:Gar_ABFG}) by
\begin{gather}\label{eq:Ps_Gar_AB}
 A(x)=\prod_{i=1}^{N}(a_ix)_1, \qquad B(x)=\prod_{i=1}^{N}(b_ix)_1, \qquad A_1(x)=\frac{A(x)}{(a_1x)_1}, \qquad B_1(x)=\frac{B(x)}{(b_1x)_1}.
\end{gather}
We also replace the evolution direction (\ref{eq:E7T1_shift}) by
\begin{gather}\label{eq:Ps_Gar_shift}
T_1=T_{a_1}T_{b_1}.
\end{gather}
We show particular solutions in terms of the $\tau$ function
\begin{gather}\label{eq:Ps_Gar_tau}
 \tau_{m,n}=\det\left[{}_{N+1}\varphi_N \left(\substack{{b_1, \ldots, b_N ,q^{-(m+n)}}\\[3mm]{{a_1 ,\ldots, a_N}}},cq^{i+j+1}\right)\right]^n _{i,j=0},
\end{gather}
where the generalized $q$-hypergeometric function ${}_{N+1}\varphi_N$\footnote{In \cite{Suzuki15, Suzuki17} the particular solutions of a higher order $q$-Painlev\'e system was constructed in terms of the $q$-hypergeometric function ${}_{N+1}\varphi_N$ by T.~Suzuki.} \cite{GaR04} is def\/ined by (\ref{eq:HGF}). Then we have the following fact\footnote{For the proof of Proposition~\ref{prop:Ps_Gar}, see \cite[Section~5]{NY16}.}.
\begin{Proposition}\label{prop:Ps_Gar}
The polynomials $F(x)$ and $G(x)$ determined by
\begin{gather}
\frac{F\big(\frac{1}{a_i}\big)}{F\big(\frac{1}{b_j}\big)}=\alpha\frac{T_{a_i}(\tau_{m,n})T_{a_i}^{-1}(\tau_{m+1,n-1})}
{T_{b_j}^{-1}(\tau_{m,n})T_{b_j}(\tau_{m+1,n-1})}, \qquad i, j=1, \ldots, N,\nonumber\\
G\left(\frac{1}{a_i}\right)=\beta\frac{T_{a_i}(\o{\tau}_{m,n})T_{a_i}^{-1}(\tau_{m+1,n-1})}
{T_{a_1}(\tau_{m,n})T_{a_1}^{-1}(\o{\tau}_{m+1,n-1})}, \qquad i=2, \ldots, N,\nonumber\\
G\left(\frac{1}{b_i}\right)=\gamma\frac{T_{b_i}^{-1}(\tau_{m,n})T_{b_i}(\o{\tau}_{m+1,n-1})}
{T_{b_1}^{-1}(\o{\tau}_{m,n})T_{b_1}(\tau_{m+1,n-1})}, \qquad i=2, \ldots, N,\label{eq:Ps_Gar_FG}
\end{gather}
give particular solutions of a bi-rational equation
\begin{gather*}
G(x)\u{G}(x)=\frac{c\big(qx,\frac{x}{q^{m+n}}\big)_1 A_1 (x)B_1 (x)}{(a_1 x, b_1 x)_1}\qquad {\rm for} \quad F(x)=0,\\
F(x)\o{F}(x)=\left(qx,\frac{x}{q^{m+n}}\right)_1 A_1 (x)B_1 (x) \qquad {\rm for} \quad G(x)=0,\\
f_N \o{f}_N =\frac{q a_1b_1}{c}\Bigg(g_{N-1}-\frac{c\prod\limits_{i=2}^{N}(-b_i)}{a_1q^{m-1}}\Bigg)\Bigg(g_{N-1}-\frac{\prod\limits_{i=2}^{N}(-a_i)}{b_1q^n}\Bigg), \qquad
f_0\o{f}_0=\left(g_0,\frac{g_0}{c}\right)_1.
\end{gather*}
Here $\alpha$, $\beta$ and $\gamma$ are given by
\begin{gather*}
\alpha=-cq^{n-m}\frac{(a_i q^{m+n})_1 \big(\frac{a_i}{q}\big)_1^n \big(\frac{b_j}{q}\big)_1^n}{(a_i)_1^{n+1}(b_j)_1^n}\frac{B\big(\frac{1}{a_i}\big)}{A\big(\frac{1}{b_j}\big)},\\
\beta=c\frac{(b_1 , a_i q^{m+n})_1 \big(\frac{a_i}{q}\big)_1^n B_1 \big(\frac{1}{a_i}\big)}{a_1 q^m \big(\frac{b_1}{a_1}\big)_1 (a_i)_1^{n+1}},\qquad \gamma=\frac{(a_1)_1 (b_i)_1^n A_1\big(\frac{1}{b_i}\big)}{b_1 q^n \big(\frac{a_1}{b_1}\big)_1 \big(\frac{b_i}{q}\big)_1^n}.
\end{gather*}
\end{Proposition}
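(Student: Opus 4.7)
The plan is to identify $\tau_{m,n}$ as the Casoratian determinant arising from a Padé interpolation problem for the seed function
\begin{gather*}
y(x) = {}_{N+1}\varphi_N\!\left(\substack{{b_1,\ldots,b_N,q^{-(m+n)}}\\[3mm]{a_1,\ldots,a_N}}, cx\right),
\end{gather*}
and then to verify that the claimed values of $F$ and $G$ are precisely the data of the associated interpolating rational function. Since the polynomials $F(x)$ (of degree $N$) and $G(x)$ (of degree $N-1$) are determined by their values at the $N+1$ points $\{1/a_i\}\cup\{1/b_j\}$ and the $2N-2$ points $\{1/a_i\}_{i\ge 2}\cup\{1/b_i\}_{i\ge 2}$ plus two boundary coefficients, the formulas in~\eqref{eq:Ps_Gar_FG} together with the Lagrange interpolation reconstruct $F$ and $G$ uniquely up to the prefactors $\alpha$, $\beta$, $\gamma$.

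The key technical step is to exploit the three-term contiguity relations of ${}_{N+1}\varphi_N$ under $q$-shifts of its top/bottom parameters (and of $c$). Applied row-wise and column-wise to the defining determinant~\eqref{eq:Ps_Gar_tau}, these yield linear relations between $\tau_{m,n}$, $T_{a_i}^{\pm 1}(\tau_{m,n})$, $T_{b_j}^{\pm 1}(\tau_{m,n})$, and their neighbors in $(m,n)$. Combining them with Jacobi's identity on minors of the $(n+1)\times(n+1)$ defining matrix produces Plücker/Hirota-type bilinear identities among the various shifted $\tau$'s. After substitution of~\eqref{eq:Ps_Gar_FG} into the bi-rational equations, each q-Garnier relation reduces to precisely one such bilinear identity; the boundary relations for $f_0\o{f}_0$ and $f_N\o{f}_N$ correspond, respectively, to the $x\to 0$ and $x\to\infty$ limits of $F(x)\o{F}(x) = (qx, x/q^{m+n})_1 A_1(x)B_1(x)$ on the locus $G=0$.

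The main obstacle will be tracking the prefactors $\alpha$, $\beta$, $\gamma$ explicitly. The structural part (the bilinear Plücker identities) falls out uniformly from determinant algebra, but pinning down the precise $q$-Pochhammer ratios requires careful bookkeeping of how each shift $T_{a_i}$, $T_{b_j}$ multiplies individual rows/columns of the determinant by factors of the form $(a_i)_1$, $(b_j/q)_1^n$, $(a_iq^{m+n})_1$, and of how the overall normalization of $y(x)$ transforms. The cleanest way to fix these is to evaluate both sides of each claimed identity at a single convenient specialization (for instance the leading coefficient in $x$, or the value at $x = 1/a_1$) and match.

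Once the bilinear identities and normalizations are in place, the proof closes by invoking Proposition~\ref{prop:Gar_com}: the verified formulas realize a simultaneous solution of the Lax equations $L_1=L_2=L_3=0$, whose compatibility is exactly the q-Garnier system~\eqref{eq:Gar_ev}. Thus the data~\eqref{eq:Ps_Gar_FG} give a particular solution as claimed.
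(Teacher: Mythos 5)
Your overall strategy (verify the bilinear relations directly from determinant identities for the $\tau$-function) is a genuinely different route from the paper's, which does not prove Proposition~\ref{prop:Ps_Gar} by manipulating $\tau$'s at all: it refers to \cite[Section~5]{NY16}, where the statement is obtained by the Pad\'e interpolation method reproduced in Appendix~\ref{sec:Pade} for the reduced case. There one interpolates the product $\psi(x)=\prod_i\frac{(a_ix,b_i)_\infty}{(a_i,b_ix)_\infty}$ on the $q$-grid; the interpolating polynomials $P$, $Q$ automatically give two solutions $y=P$, $y=\psi Q$ of the linear relations $L_2=L_3=0$, whose coefficients are $F$ and $G$, so the solution property follows from Proposition~\ref{prop:Gar_com} without any bilinear identities; the function ${}_{N+1}\varphi_N$ and the determinant $\tau_{m,n}$ enter only at the end, through Jacobi's explicit formula \eqref{eq:Pade_Ps_Jacobi}--\eqref{eq:Pade_Ps_qJacobi} when $P$, $Q$ are evaluated at $x=\frac{1}{a_s},\frac{q}{a_s},\frac{q}{b_s},\frac{1}{b_s}$. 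In particular $\tau_{m,n}$ is a Hankel-type determinant in the \emph{argument} $cq^{i+j+1}$, produced by that evaluation; your opening identification of it as the Casoratian of a Pad\'e problem with seed ${}_{N+1}\varphi_N(\cdot\,,cx)$ is a different object and would itself require proof.

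As a self-contained argument your proposal has concrete gaps. First, the heart of the matter --- that each relation of \eqref{eq:Gar_ev}, which holds only \emph{at the zeros of $F$ (resp.\ $G$)}, reduces after substituting \eqref{eq:Ps_Gar_FG} to ``precisely one'' Pl\"ucker/Hirota identity obtainable from contiguity relations plus the Jacobi minor identity --- is asserted, not derived. The data \eqref{eq:Ps_Gar_FG} specify $F$ and $G$ only at the points $\frac{1}{a_i}$, $\frac{1}{b_j}$, whereas the relations must be checked at the algebraic zeros of $F$ and $G$; after Lagrange reconstruction these zeros depend on all the $\tau$'s simultaneously, and nothing in the sketch shows the required identities actually close up (this is the entire content of the proposition, and it is exactly what the Pad\'e construction is designed to avoid). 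Second, the two coefficient relations for $f_0\o{f}_0$ and $f_N\o{f}_N$ are \emph{not} the $x\to 0$ and $x\to\infty$ limits of $F(x)\o{F}(x)=\big(qx,\frac{x}{q^{m+n}}\big)_1A_1(x)B_1(x)$ ``on the locus $G=0$'': that equality holds only at the $N-1$ zeros of $G$, so no such limits exist; these are independent relations and need their own verification. Third, the closing appeal to Proposition~\ref{prop:Gar_com} is circular as stated: to use it you must exhibit an actual function $y(x)$ satisfying $L_1=L_2=0$ in \eqref{eq:Gar_L} whose coefficients are built from the claimed $F$, $G$ (in the paper's method this is $y=P$ and $y=\psi Q$), and you never construct one; conversely, had you completed the direct bilinear verification, Proposition~\ref{prop:Gar_com} would add nothing. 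Finally, the normalization constants $\alpha$, $\beta$, $\gamma$ cannot be fixed by matching at ``a single convenient specialization'' independently of the structural identities, since they are exactly the ratios of the known prefactors (Pochhammer products and the factors $\mathcal{F}$, $\mathcal{F}'$) that appear when the determinant formula is evaluated at the special points; without carrying out that evaluation the stated values remain unverified.
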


\subsection[Reduction to the $q$-Painlev\'e type system]{Reduction to the $\boldsymbol{q}$-Painlev\'e type system}\label{subsec:Ps_E7T1}
In this subsection we derive particular solutions for the Painlev\'e type equation (\ref{eq:E7T1_ev}). In a similar way as in Section~\ref{subsec:Lax_E7T1}, we consider the reduction from the particular case $N=3$ of the $q$-Garnier system~(\ref{eq:Gar_ev}). In order to do this, we impose a constraint of parameters
\begin{gather}\label{eq:Ps_E7T1_cons}
c=1, \qquad q^{m-n}\prod_{i=1}^3\frac{a_i}{b_i}=1,
\end{gather}
and the specialization (\ref{eq:E7T1_rn}). Then the tau function $\tau_{m,n}$ (\ref{eq:Ps_Gar_tau}) is reduced to the following function
\begin{gather*}
\tau_{m,n}=\det\left[{}_{4}\varphi_3 \left(\substack{{b_1, b_2, b_3 ,q^{-(m+n)}}\\[3mm]{{a_1 ,a_2, a_3}}},q^{i+j+1}\right)\right]^n _{i,j=0},
\end{gather*}
where the generalized $q$-hypergeometric function ${}_4\varphi_3$ is def\/ined by (\ref{eq:HGF}). As the case of a~reduction of Proposition~\ref{prop:Ps_Gar}, we have the following\footnote{As another derivation of the particular solutions~(\ref{eq:Ps_E7T1_fg}), see Section~\ref{subsec:Pade_Ps} (Pad\'e interpolation method).}.
\begin{Proposition}\label{prop:Ps_E7T1}
The particular values of $f$ and $g$ determined by
\begin{gather}
\frac{1-\frac{f}{a_i}}{1-\frac{f}{b_j}}=\alpha^{\prime}\frac{T_{a_i}(\tau_{m,n})T_{a_i}^{-1}(\tau_{m+1,n-1})} {T_{b_j}^{-1}(\tau_{m,n})T_{b_j}(\tau_{m+1,n-1})}, \qquad i,j=1,2,3,\nonumber\\
1+\frac{g}{a_i}+\frac{\kappa}{a_i^2}=\beta^{\prime}\frac{T_{a_i}(\o{\tau}_{m,n})T_{a_i}^{-1}(\tau_{m+1,n-1})}{T_{a_1}(\tau_{m,n}) T_{a_1}^{-1}(\o{\tau}_{m+1,n-1})}, \qquad i=2,3,\nonumber\\
1+\frac{g}{b_i}+\frac{\kappa}{b_i^2}=\gamma^{\prime}\frac{T_{b_i}^{-1}(\tau_{m,n})T_{b_i}(\o{\tau}_{m+1,n-1})}{T_{b_1}^{-1}(\o{\tau}_{m,n}) T_{b_1}(\tau_{m+1,n-1})}, \qquad i=2,3,\label{eq:Ps_E7T1_fg}
\end{gather}
give particular solutions of the following overdetermined bi-rational equation:
\begin{gather}
\big(f^2+gf+\kappa\big)\big(f^2+\u{g}f+q\kappa\big)=\frac{\big(f-\frac{1}{q^{m+n}}\big)(f-q)\prod\limits_{i=2,3}(f-a_i)(f-b_i)}{(f-a_1)(f-b_1)},\nonumber\\
\frac{x_1^2(1-fx_1)(1-\o{f}x_1)}{x_2^2(1-fx_2)(1-\o{f}x_2)}=\frac{\big(1-\frac{x_1}{q^{m+n}}\big)(1-qx_1)\prod\limits_{i=2,3}(1-a_ix_1)(1-b_ix_1)}
{\big(1-\frac{x_2}{q^{m+n}}\big)(1-qx_2)\prod\limits_{i=2,3}(1-a_ix_2)(1-b_ix_2)}.
\label{eq:Ps_E7T1_ev}\end{gather}
Here $\alpha^{\prime}$, $\beta^{\prime}$ and $\gamma^{\prime}$ are given by
\begin{gather*}
\alpha^{\prime}=-q^{n-m}\frac{a_i (a_iq^{m+n})_1\big(\frac{a_i}{q},\frac{b_j}{q}\big)_1^n}{b_j(a_i)_1^{n+1}(b_j)_1^n}
\frac{\prod\limits_{s=1}^3\big(\frac{b_s}{a_i}\big)_1}{\prod\limits_{s=1}^3\big(\frac{a_s}{b_j}\big)_1}, \\
\beta^{\prime}=\frac{\big(b_1,a_iq^{m+n},\frac{b_2}{a_i},\frac{b_3}{a_i}\big)_1\big(\frac{a_i}{q}\big)_1^n}{a_1q^m\big(\frac{b_1}{a_1}\big)_1(a_i)_1^{n+1}},\qquad
\gamma^{\prime}=\frac{\big(a_1,\frac{a_2}{b_i},\frac{a_3}{b_i}\big)_1(b_i)_1^n}{b_1q^n\big(\frac{a_1}{b_1}\big)_1\big(\frac{b_2}{q}\big)_1^n},
\end{gather*}
and the evolution direction is as in $T_1$ (\ref{eq:Ps_Gar_shift}) and $x=x_1, x_2$ are solutions of an equation $\varphi=0$:
\begin{gather}\label{eq:Ps_E7T1_phi}
\varphi(x)=1+gx+\kappa x^2,
\end{gather}
where $\kappa=\frac{a_2a_3}{b_1q^n}$.
\end{Proposition}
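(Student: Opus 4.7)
My plan is to derive Proposition~\ref{prop:Ps_E7T1} as a specialization of Proposition~\ref{prop:Ps_Gar}, following exactly the reduction pattern used in Section~\ref{subsec:Lax_E7T1} to pass from the $q$-Garnier Lax pair to that of the $q$-Painlev\'e type system. First I would impose the constraint (\ref{eq:Ps_E7T1_cons}) and the specialization (\ref{eq:E7T1_rn}) on the data of Proposition~\ref{prop:Ps_Gar} with $N=3$. Under (\ref{eq:E7T1_rn}) the two polynomials $F(x)=\sum_{i=0}^{3}f_ix^i$ and $G(x)=\sum_{i=0}^{2}g_ix^i$ collapse to
\[
F(x)=w_1x(1-fx),\qquad G(x)=e_1\,\varphi(x),
\]
with $\varphi$ as in (\ref{eq:Ps_E7T1_phi}); the zeros of $F$ become $x=0$ and $x=1/f$, and the zeros of $G$ are the two roots $x_1,x_2$ of $\varphi$. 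Combining the rename (\ref{eq:Ps_sp}) with the two reduction conditions $c=1$ and $q^{m-n}\prod a_i/b_i=1$ yields $c_1=q^m$ and $e_1=q^{m+n}b_1/(a_2a_3)$, whence $\kappa=c_1/e_1=a_2a_3/(q^nb_1)$, matching the stated value.

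Next I would substitute the specialized $F$ and $G$ into the three determinantal identities (\ref{eq:Ps_Gar_FG}) of Proposition~\ref{prop:Ps_Gar}. The ratio $F(1/a_i)/F(1/b_j)=(b_j/a_i)(1-f/a_i)/(1-f/b_j)$ immediately produces the first formula of (\ref{eq:Ps_E7T1_fg}) after absorbing $a_i/b_j$ to define $\alpha'$, and the evaluations $G(1/a_i)=e_1(1+g/a_i+\kappa/a_i^2)$ and $G(1/b_i)=e_1(1+g/b_i+\kappa/b_i^2)$ give the remaining two formulas after the leading factor $e_1$ is absorbed into $\beta'$ and $\gamma'$. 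The explicit coefficients $\alpha,\beta,\gamma$ of Proposition~\ref{prop:Ps_Gar} then simplify to $\alpha',\beta',\gamma'$ by direct substitution of (\ref{eq:Ps_sp}): the factors $B(1/a_i)=\prod_{s=1}^{3}(b_s/a_i)_1$ and $A(1/b_j)=\prod_{s=1}^{3}(a_s/b_j)_1$ produce the extra product terms, and the remaining prefactors reorganize into the stated form.

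Finally I would verify that the bi-rational evolution equations in Proposition~\ref{prop:Ps_Gar} reduce to (\ref{eq:Ps_E7T1_ev}). Evaluating $G(x)\underline{G}(x)=\cdots$ at the two zeros $x=0,1/f$ of $F$ yields the first equation of (\ref{eq:Ps_E7T1_ev}) up to an overall factor $e_1^2/q$, while evaluating $F(x)\overline{F}(x)=\cdots$ at $x_1,x_2$ and using the relation $x_1+x_2=-g/\kappa$ coming from $\varphi=0$ yields the second equation of (\ref{eq:Ps_E7T1_ev}); the two boundary identities for $f_0\overline{f}_0$ and $f_N\overline{f}_N$ are trivially satisfied under (\ref{eq:E7T1_rn}) since $f_0=f_3=0$. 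The main bookkeeping obstacle will be tracking the constant on the right-hand side of the first equation of (\ref{eq:Ps_E7T1_ev}): the residual coefficient collapses to $1$ precisely because of the reduction constraint $\prod_{i=1}^{3}a_i/b_i=q^{n-m}$ imposed in (\ref{eq:Ps_E7T1_cons}). Once this normalization is verified, the remaining identities follow by routine substitution.
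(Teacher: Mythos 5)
Your proposal is correct and follows essentially the same route as the paper, whose proof is exactly this one-line reduction: substitute the constraint (\ref{eq:Ps_E7T1_cons}) and the specialization (\ref{eq:E7T1_rn}) into the $q$-Garnier solutions (\ref{eq:Ps_Gar_FG}) of Proposition \ref{prop:Ps_Gar} with $N=3$. You merely make explicit the bookkeeping the paper leaves implicit (the collapse $F(x)=w_1x(1-fx)$, $G(x)=e_1\varphi(x)$, the value $\kappa=c_1/e_1$, absorption of constants into $\alpha',\beta',\gamma'$, and the check that the evolution equations reduce to (\ref{eq:Ps_E7T1_ev})).
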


\begin{proof}
Substituting the conditions (\ref{eq:Ps_E7T1_cons}) and (\ref{eq:E7T1_rn}) into the particular solutions (\ref{eq:Ps_Gar_FG}),
we obtain~(\ref{eq:Ps_E7T1_fg}).
\end{proof}

\section{Conclusions}\label{sec:conc}

The main results of this paper are the following.
\begin{itemize}\itemsep=0pt
\item We showed in Proposition \ref{prop:BC} that the $q$-Painlev\'e type system (\ref{eq:E7T1_ev}) is the bi-rational transformation and is related to the novel realization (i.e., conf\/iguration) (\ref{eq:E7T1_8p}) of the symmetry/surface of type $E_7^{(1)}$/$A_1^{(1)}$. Then the system (\ref{eq:E7T1_ev}) turned out to be a variation of the $q$-$E_7^{(1)}$ system (\ref{eq:E7T2_ev}).
\item We obtained the Lax equations (\ref{eq:E7T1_L}) for the system (\ref{eq:E7T1_ev}) from the reduction of the particular case $N=3$ of the $q$-Garnier system (\ref{eq:Gar_ev}), and clarif\/ied the connection between the system (\ref{eq:E7T1_ev}) and the $q$-$E_7^{(1)}$ system (\ref{eq:E7T2_ev}) by comparing their $L_1$ equations in Theorem~\ref{thm:L1_Corres}.
\item In Proposition \ref{prop:Ps_E7T1} the determinant formulas of the particular solutions for the system~(\ref{eq:E7T1_ev}) was expressed in terms of the generalized $q$-hypergeometric function ${}_4\varphi_3$ through the si\-milar reduction of the particular case $N=3$.
\end{itemize}
Extending the results of this paper, we naturally have the following open problems. One may consider several variations of the $q$-$E_7$ system according to several deformation direction such as $T_1$ and $T_2$, and investigates a connection among these systems. We will carry out similar research on discrete Painlev\'e and Garnier systems \cite{NY17}. It seems to be interesting to study reductions of cases $N\geq4$ of the $q$-Garnier system.

\appendix

\section[From Pad\'e interpolation to $q$-Painlev\'e type system]{From Pad\'e interpolation to $\boldsymbol{q}$-Painlev\'e type system}\label{sec:Pade}

By using a Pad\'e interpolation problem with $q$-grid, in \cite{Nagao15} we derived the scalar Lax pair, the evolution equation and the particular solutions for the $q$-$E_7^{(1)}$ system. In this appendix, in a~similar manner as in \cite{Nagao15}, we directly derive the data of the $q$-Painlev\'e type system~(\ref{eq:E7T1_ev}) given in Sections~\ref{subsec:Lax_E7T1} and~\ref{subsec:Ps_E7T1}.
\subsection{Scalar Lax pair and evolution equation
}\label{subsec:Pade_Lax}
Suppose we have complex parameters $q$ ($|q| <1$), $a_1$, $a_2$, $a_3$, $b_1$, $b_2$ and $b_3 \in \C^{\times}$ with the constraint~(\ref{eq:Ps_E7T1_cons}). Then we consider a function
\begin{gather}\label{eq:Pade_Lax_psi}
\psi(x)=\prod_{i=1}^{3}\frac{(a_i x, b_i)_\infty}{(a_i, b_i x)_\infty}.
\end{gather}
Let $P(x)$ and $Q(x)$ be polynomials of degree $m$ and $n$ $\in \Z_{\ge 0}$ in $x$. Then we assume that the polynomials $P$ and $Q$ satisfy the following Pad\'e interpolation condition:
\begin{gather}\label{eq:Pade_Lax_pade}
\psi(x_s)=\frac{P(x_s)}{Q(x_s)}, \qquad x_s=q^s, \quad s=0, 1, \ldots, m+n.
\end{gather}
The common normalizations of the polynomials $P$ and $Q$ in $x$ are f\/ixed as $P(0)=1$. The parameter shift operator is given by $T_1$ (\ref{eq:Ps_Gar_shift}). Consider two linear relations: $L_2=0$ among $y(x)$, $y(qx)$, $\o{y}(x)$ and $L_3=0$ among $y(x)$, $\o{y}(x)$, $\o{y}\big(\frac{x}{q}\big)$ satisf\/ied by the functions $y=P$ and $y=\psi Q$. Then we have:
\begin{Proposition}
The linear relations $L_2$ and $L_3$\footnote{$L_2 = 0$ and $L_3 = 0$ \eqref{eq:Pade_Lax_L2L3} can be derived by substituting \eqref{eq:Ps_sp}, \eqref{eq:Ps_E7T1_cons} and $w_1\to -\frac{q^{2m}C_0}{a_2a_3}$, $\overline{w}_1\to -\frac{q^{2m}C_1}{b_2b_3}$
into $L_2 = 0$ and $L_3 = 0$ \eqref{eq:E7T1_L}.} can be expressed as follows
\begin{gather} L_2(x)=C_0x(fx)_1\o{y}(x)- \left(\frac{x}{q^{m+n}}\right)_1A_1(x)y(qx)+(b_1x)_1\varphi(x)y(x)=0,\nonumber\\
L_3(z)=C_1\frac{x}{q}\left(\frac{\o{f}x}{q}\right)_1y(x)+(a_1x)_1\varphi\left(\frac{x}{q}\right)\o{y}(x)-(x)_1B_1
\left(\frac{x}{q}\right)\o{y}\left(\frac{x}{q}\right)=0,\label{eq:Pade_Lax_L2L3}
\end{gather}
where $\varphi$ is given by \eqref{eq:Ps_E7T1_phi} and $A_1$, $B_1$ are the same as the case $N=3$ of \eqref{eq:Ps_Gar_AB}. Here $f,g,C_0,C_1\in\P^1$ are constants depending on parameters $a_i, b_j \in\C^{\times}$, $m, n \in\Z_{\ge 0}$.
\end{Proposition}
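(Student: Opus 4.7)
The plan is to derive $L_2 = 0$ and $L_3 = 0$ by the standard Padé interpolation scheme, realising them as the (essentially unique) three-term linear relations simultaneously annihilated by the two seed functions $y_1(x) = P(x)$ and $y_2(x) = \psi(x) Q(x)$. First, I would record the two contiguity identities for $\psi$, namely
\[
\frac{\psi(qx)}{\psi(x)} = \prod_{i=1}^{3} \frac{(b_i x)_1}{(a_i x)_1}, \qquad \frac{\o{\psi}(x)}{\psi(x)} = \frac{(a_1)_1 (b_1 x)_1}{(b_1)_1 (a_1 x)_1},
\]
which follow at once from \eqref{eq:Pade_Lax_psi}, the direction $T_1 = T_{a_1} T_{b_1}$, and $(qa)_\infty = (a)_\infty/(1-a)$. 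I would also rewrite the Padé condition \eqref{eq:Pade_Lax_pade} as
\[
P(x) - \psi(x) Q(x) = \prod_{s=0}^{m+n}\bigl(1 - q^{s-m-n} x\bigr) R(x)
\]
for some remainder $R$, so that the two seed functions carry matching residuals on the $q$-grid.

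Next I would pose the ansatz $L_2[y] = \gamma(x)\o{y}(x) + \alpha(x) y(qx) + \beta(x) y(x)$ and observe that $L_2[y_1] = L_2[y_2] = 0$ determines $(\gamma,\alpha,\beta)$ up to a common rational factor as the three signed $2{\times}2$ minors of the matrix with rows $(\o{y}_j, y_j(qx), y_j)$, $j=1,2$; in particular $\gamma \propto P(qx)\psi(x)Q(x) - P(x)\psi(qx)Q(qx)$, and analogously for $\alpha,\beta$. Substituting $y_2 = \psi Q$ into these minors and clearing $\psi$-prefactors via the contiguity identities makes every minor a polynomial in $x$, with degrees controlled by $\deg P = m$, $\deg Q = n$, and the parameter constraint \eqref{eq:Ps_E7T1_cons}. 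Comparing to the minimal polynomial ansatz compatible with both $L_2[P] = 0$ and $L_2[\psi Q] = 0$ then forces the stated factored forms
\[
\gamma(x) = C_0 x (fx)_1, \qquad \alpha(x) = -(x/q^{m+n})_1\, A_1(x), \qquad \beta(x) = (b_1 x)_1 \varphi(x),
\]
with $\varphi(x) = 1+gx+\kappa x^2$ and $\kappa = a_2 a_3/(b_1 q^n)$ pinned by top-degree matching; the three constants $f, g, C_0 \in \P^1$ encode the Padé data. The derivation of $L_3$ is entirely parallel, replacing $x \to qx$ by $x \to x/q$ and using the backward contiguity $\o{\psi}(x/q)/\o{\psi}(x)$, which produces the constants $\o{f}$ and $C_1$ in place of $f$ and $C_0$.

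The main obstacle is justifying the \emph{specific} factorisations above, rather than merely matching degrees. The efficient route is to identify each predicted root as a forced zero of the corresponding minor: the endpoint factor $(x/q^{m+n})_1$ in $\alpha$ arises because the Padé residual identity, after the shift $x \to qx$, acquires an additional vanishing at $x = q^{m+n}$; the factor $(b_1 x)_1$ in $\beta$ is inherited from the zero at $x = 1/b_1$ of the numerator of $\o{\psi}/\psi$; the linear factor $(fx)_1$ in $\gamma$ and the quadratic $\varphi$ in $\beta$ carry the three free parameters $f, g, C_0$ of the construction, which are then forced to enter in the stated shape by a lowest-order match. Once these forced zeros are secured by direct substitution into the minor formulas, the remaining freedom collapses to the stated four parameters and the normalizations are fixed. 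As an independent consistency check, substituting the relabelling \eqref{eq:Ps_sp}, the constraint \eqref{eq:Ps_E7T1_cons}, and the rescalings $w_1 \mapsto -q^{2m}C_0/(a_2 a_3)$, $\o{w}_1 \mapsto -q^{2m}C_1/(b_2 b_3)$ into the general Lax equations \eqref{eq:E7T1_L} reproduces exactly \eqref{eq:Pade_Lax_L2L3}.
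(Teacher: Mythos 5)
Your proposal follows essentially the same route as the paper's proof: both realise $L_2$ and $L_3$ as the $2\times 2$-minor (Casorati determinant) relations annihilated by $P$ and $\psi Q$, use the two contiguity ratios of $\psi$ to turn the minors into polynomials, strip off the zeros forced by the interpolation grid, and pin down the residual low-degree factors $x(fx)_1$, $(b_1x)_1\varphi(x)$, $\big(\frac{x}{q^{m+n}}\big)_1A_1(x)$ by degree counting together with expansions at $x=0$ and $x=\infty$. One small correction to a heuristic you give: the extra factor $\big(\frac{x}{q^{m+n}}\big)_1$ in the coefficient of $y(qx)$ arises not from ``shifting $x\to qx$ in the residual identity'' but because that coefficient is the minor $|{\bf y}(x),\o{{\bf y}}(x)|$, which vanishes at all $m+n+1$ grid points, whereas the two minors involving ${\bf y}(qx)$ vanish only at the first $m+n$ of them.
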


\begin{proof}
By the def\/inition of the relations $L_2=0$ and $L_3=0$, they can be written as
\begin{gather}
L_2(x)\propto
\begin{vmatrix}
y(x) & y(qx) & \o{y}(x) \\
{\bf y}(x) & {\bf y}(qx) & \o{\bf y}(x)
\end{vmatrix}
=D_1(x) \o{y}(x)-D_2(x)y(qx)+D_3(x)y(x)=0,\nonumber\\ L_3(x)\propto
\begin{vmatrix}
y(x) & \o{y}(x) & \o{y}\big(\frac{x}{q}\big) \\
{\bf y}(x) & \o{\bf y}(x) & \o{\bf y}\big(\frac{x}{q}\big)
\end{vmatrix}
=\o{D}_1\left(\frac{x}{q}\right)y(x)+D_3\left(\frac{x}{q}\right) \o{y}(x)-D_2(x)\o{y}\left(\frac{x}{q}\right)=0,\label{eq:Pade_Lax_mat}
\end{gather}
where ${\bf y}(x)=\left[\begin{matrix}P(x)\\ \psi(x)Q(x)\end{matrix}\right]$ and Casorati determinants
\begin{gather}\label{eq:Pade_Lax_D}
D_1(x)=|{\bf y}(x),{\bf y}(qx)|, \qquad D_2(x)=|{\bf y}(x),{\o{\bf y}}(x)|, \qquad D_3(x)=|{\bf y}(qx),\o{{\bf y}}(x)|.
\end{gather}
Taking note of the relations
\begin{gather}\label{eq:Pade_Lax_ratio}
\frac{\psi(qx)}{\psi(x)}=\frac{B(x)}{A(x)},
\qquad \frac{\o{\psi}(x)}{\psi(x)}=\frac{(a_1,b_1x)_1}{(a_1x,b_1)_1},
\end{gather}
where $A$ and $B$ are the same as the case $N=3$ of (\ref{eq:Ps_Gar_AB}), we rewrite the Casorati determinants~(\ref{eq:Pade_Lax_D}) into the following determinants
\begin{gather} D_1(x)=\frac{\psi(x)}{A(x)}R_1(x)
=:\frac{\psi(x)}{A(x)}\prod_{i=0}^{m+n-1}\left(\frac{x}{q^i}\right)_1c_0 x(fx)_1,\nonumber\\
 D_2(x)=\frac{\psi(x)}{(a_1x,b_1)_1}R_2(x)=:\frac{\psi(x)}{(a_1x,b_1)_1}\prod_{i=0}^{m+n}\left(\frac{x}{q^i}\right)_1c_0^{\prime},\nonumber\\
 D_3(x)=\frac{\psi(x)}{A(x)}R_3(x)=:\frac{\psi(x)}{A(x)(b_1)_1}\prod_{i=0}^{m+n-1}\left(\frac{x}{q^i}\right)_1c_0^{\prime}(b_1x)_1\varphi(x),\label{eq:Pade_Lax_DD}
\end{gather}
where
\begin{gather}
R_1(x)=B(x)P(x)Q(qx)-A(x)P(qx)Q(x), \nonumber\\
R_2(x)=(a_1,b_1x)_1P(x)\o{Q}(x)-(a_1x,b_1)_1\o{P}(x)Q(x), \nonumber\\
R_3(x)=(a_1,b_1x)_1A_1(x)P(qx)\o{Q}(x)-(b_1)_1B(x)\o{P}(x)Q(qx).\label{eq:Pade_Lax_R}
\end{gather}
Here $c_0$ and $c_0^{\prime}$ are some constants depending on the parameters $a_i$, $b_j$, $m$ and $n$.
Computing Taylor expansions at $x=0$ and $x=\infty$ in the expressions $R_2(x)$ and $R_3(x)$(\ref{eq:Pade_Lax_R}), we determine~$\varphi$ by~(\ref{eq:Ps_E7T1_phi}). As a result, we obtain the desired relations $L_2$ and $L_3$ (\ref{eq:Pade_Lax_L2L3}) where
$C_0=\frac{c_0}{c_0^{\prime}}$ and $C_1=\frac{(a_1)_1\o{c}_0}{c_0^{\prime}}$.
\end{proof}

Next we have:
\begin{Proposition}\label{prop:Pade_Lax_ev}
The constants $f$ and $g$ satisfy the $q$-Painlev\'e type system \eqref{eq:Ps_E7T1_ev}, and they play the role of dependent variables for~\eqref{eq:Ps_E7T1_ev}.
\end{Proposition}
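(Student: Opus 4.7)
The plan is to leverage the identification, noted in the footnote to (\ref{eq:Pade_Lax_L2L3}), between the Pad\'e-derived relations $L_2=0$, $L_3=0$ and the deformation equations $L_2=0$, $L_3=0$ of Section~\ref{subsec:Lax_E7T1}. Under the parameter substitution (\ref{eq:Ps_sp}), the reduction (\ref{eq:Ps_E7T1_cons}), and the gauge identifications $w_1\mapsto -q^{2m}C_0/(a_2a_3)$ and $\o{w}_1\mapsto -q^{2m}C_1/(b_2b_3)$, the linear relations (\ref{eq:Pade_Lax_L2L3}) coincide with the specialization of~(\ref{eq:E7T1_L}) to the reduction setting, and $\varphi$ in~(\ref{eq:Ps_E7T1_phi}) corresponds to $\varphi$ in~(\ref{eq:E7T1_phi}) with $c_1=1$ and $e_1/c_1=\kappa$. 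In particular, $f,g$ in the Pad\'e formulation coincide with the dependent variables $f,g$ of the main text.

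Given this identification, the statement reduces to Proposition~\ref{prop:E7T1_com}: the compatibility of $L_1=0$ and $L_2=0$~(\ref{eq:E7T1_L}) is equivalent to the $q$-Painlev\'e type system~(\ref{eq:E7T1_ev}). Since by Remark~\ref{rem:L1L2L3} the scalar Lax pair $\{L_1=0,L_2=0\}$ is equivalent to the deformation pair $\{L_2=0,L_3=0\}$, the compatibility of the Pad\'e linear relations delivers the main-text evolution equations, which under the substitutions (\ref{eq:Ps_sp}), (\ref{eq:Ps_E7T1_cons}) become exactly~(\ref{eq:Ps_E7T1_ev}).

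Alternatively, one may proceed in a more self-contained fashion, mirroring the opening of Section~\ref{subsec:L1_E7T1}. Evaluating $L_2=0$ and $L_3=0$ at $x=x_1$ and $x=x_2$, the two roots of $\varphi$ in~(\ref{eq:Ps_E7T1_phi}), one obtains the analogs of the relations (\ref{eq:E7T1_w1})--(\ref{eq:E7T1_w2}) in the form $C_0\o{C}_1\, x_i^2(1-fx_i)(1-\o{f}x_i)=q\,A_1(x_i)B_1(x_i)$ for $i=1,2$. Taking their ratio eliminates the auxiliary gauge constants and yields the second equation of~(\ref{eq:Ps_E7T1_ev}), while eliminating $\o{y}$ between $L_2=0$ and $L_3=0$ and matching polynomial degrees at the poles of $\varphi$ produces the first equation of~(\ref{eq:Ps_E7T1_ev}).

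The principal obstacle is bookkeeping: one must verify that the reduction~(\ref{eq:Ps_E7T1_cons}) together with the change of variables~(\ref{eq:Ps_sp}) is consistent with the main-text balance $\prod a_i/b_i=qc_1^2/d_1^2$, and that the Pad\'e parameter shift $T_1=T_{a_1}T_{b_1}$ of~(\ref{eq:Ps_Gar_shift}) is correctly aligned with the main-text shift $T_1=T_{a_1}^{-1}T_{b_1}^{-1}$ of~(\ref{eq:E7T1_shift}) under the inversion $a_i\leftrightarrow 1/a_i$, $b_i\leftrightarrow 1/b_i$ built into~(\ref{eq:Ps_sp}). Once these conventions are reconciled, the conclusion is immediate, and $f,g$ play the role of dependent variables since the compatibility relations recover them uniquely from their initial data.
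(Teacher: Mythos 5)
Your proposal is correct and follows essentially the same route as the paper: the paper's proof is exactly the observation that the compatibility of the Pad\'e-derived relations (\ref{eq:Pade_Lax_L2L3}) yields the system (\ref{eq:Ps_E7T1_ev}), which is what you establish, either directly or via the footnote identification with (\ref{eq:E7T1_L}) together with Proposition~\ref{prop:E7T1_com} and Remark~\ref{rem:L1L2L3}. Your extra bookkeeping remarks (gauge constants cancelling in the ratio, alignment of $T_1$ under the inversion of parameters in (\ref{eq:Ps_sp})) are consistent with the paper's conventions and do not change the argument.
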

\begin{proof}
The compatibility of the relations (\ref{eq:Pade_Lax_L2L3}) gives the system (\ref{eq:Ps_E7T1_ev}).
\end{proof}

\subsection{Particular solution}\label{subsec:Pade_Ps}
We construct particular solutions of the $q$-Painlev\'e type system (\ref{eq:E7T1_ev}) given in terms of the $q$-hypergeometric function ${}_{4}\varphi_3$ in Section~\ref{subsec:Ps_E7T1}. We derive the explicit forms~(\ref{eq:Ps_E7T1_fg}) of variables $\{f, g\}$ appearing in the Casorati determinants $D_1$ and $D_3$ (\ref{eq:Pade_Lax_DD}). They are interpreted as the particular solutions for the system (\ref{eq:Ps_E7T1_ev}), due to Proposition~\ref{prop:Pade_Lax_ev}.

\begin{Proposition}[\cite{Jacobi}, see also \cite{Ikawa13, Nagao15, NY16}]\label{prop:Pade_Ps_Jacobi}
For a given sequence $\psi_s$, the polynomials $P(x)$ and~$Q(x)$ of degree $m$ and $n$ satisfying a Pad\'e interpolation problem
\begin{gather} \label{eq:Pade_Ps_pade}
\psi_s=\frac{P (x_s)}{Q (x_s)}, \qquad s=0, 1, \dots, m+n,
\end{gather}
is given as the following determinant expressions:
\begin{gather}
 P(x)=\mathcal{F}(x)\det\left[\sum^{m+n} _{s=0}u_s \frac{x_s^{i+j}}{x-x_s }\right]^n _{i,j=0}, \nonumber\\
Q(x)=\det\left[\sum^{m+n} _{s=0}u_s x_s^{i+j}(x-x_s )\right]^{n-1} _{i,j=0},\label{eq:Pade_Ps_Jacobi}
\end{gather}
where $u_s=\frac{\psi_s}{\mathcal{F}^{\prime}(x_s)}$ and $\mathcal{F}(x)=\prod\limits_{i=0}^{m+n}(x-x_i )$.
\end{Proposition}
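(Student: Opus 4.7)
The plan is to verify, via the Cauchy--Binet formula, that the given determinantal expressions yield polynomials of the required degrees satisfying the interpolation condition $P(x_r)=\psi_r Q(x_r)$ at every node $x_r$.

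First I would apply Cauchy--Binet to both determinants. The $(n+1)\times(n+1)$ matrix whose determinant defines $M_P(x):=P(x)/\mathcal{F}(x)$ factors as the product of the $(n+1)\times(m+n+1)$ matrix with entries $u_s x_s^i/(x-x_s)$ and the $(m+n+1)\times(n+1)$ matrix with entries $x_s^j$. Analogously the $n\times n$ matrix inside $Q(x)$ factors as $[x_s^i]\cdot\mathrm{diag}\{u_s(x-x_s)\}\cdot[x_s^j]$. Extracting the scalar factors from each minor and recognising the remaining determinants as Vandermondes on index subsets gives
\[ P(x)=\sum_{|S|=n+1} V_S^{2}\Bigl(\prod_{s\in S} u_s\Bigr)\prod_{t\notin S}(x-x_t),\qquad Q(x)=\sum_{|S|=n} V_S^{2}\prod_{s\in S} u_s(x-x_s), \]
where $S$ runs over subsets of $\{0,\ldots,m+n\}$ and $V_S$ denotes the Vandermonde on $\{x_s:s\in S\}$. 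These expansions make $\deg P\le m$ and $\deg Q\le n$ manifest and show that the apparent simple poles of $M_P$ at $x=x_s$ are cancelled by $\mathcal{F}$, so $P$ is indeed polynomial.

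Next I would specialise to $x=x_r$. In the expansion of $Q(x_r)$ the factor $\prod_{s\in S}(x_r-x_s)$ annihilates every subset with $r\in S$, and in the expansion of $P(x_r)$ the factor $\prod_{t\notin S}(x_r-x_t)$ annihilates every subset with $r\notin S$. The natural bijection $S\leftrightarrow S':=S\setminus\{r\}$ between $(n+1)$-subsets containing $r$ and $n$-subsets avoiding $r$ then places the two surviving sums into one-to-one correspondence. For each matched pair, the identity $V_S^{2}=V_{S'}^{2}\prod_{s\in S'}(x_r-x_s)^2$ (the sign ambiguity is killed by the squaring), together with the splitting $\mathcal{F}'(x_r)=\prod_{l\ne r}(x_r-x_l)=\prod_{s\in S'}(x_r-x_s)\cdot\prod_{t\notin S}(x_r-x_t)$ and the definition $\psi_r=u_r\mathcal{F}'(x_r)$, reduces $P(x_r)=\psi_r Q(x_r)$ to a term-by-term equality.

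The main obstacle lies in the first step: one must apply Cauchy--Binet with consistent row and column orderings so that both factors of each minor are genuinely the same Vandermonde. Because these Vandermondes ultimately enter squared, the sign ambiguities cancel in the final comparison, but the intermediate bookkeeping requires care. Once the subset-sum presentations of $P$ and $Q$ are in hand, the bijection and the termwise matching in the second step are essentially forced.
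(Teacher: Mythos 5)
Your argument is correct, but note that the paper itself offers no proof of this statement: it is quoted as the classical Jacobi formula, with the proof delegated to \cite{Jacobi} (see also \cite{Ikawa13, Nagao15, NY16}), and only the $q$-grid specialization in the following proposition is verified in the text. Your Cauchy--Binet route is a clean self-contained verification and matches the standard way this formula is established: expanding both determinants over index subsets gives
\begin{gather*}
P(x)=\sum_{|S|=n+1} V_S^{2}\Bigl(\prod_{s\in S} u_s\Bigr)\prod_{t\notin S}(x-x_t),\qquad
Q(x)=\sum_{|S|=n} V_S^{2}\prod_{s\in S} u_s(x-x_s),
\end{gather*}
from which $\deg P\le m$, $\deg Q\le n$ and the polynomiality of $P$ are immediate, and the evaluation at $x=x_r$ together with the bijection $S\leftrightarrow S\setminus\{r\}$, the identity $V_S^{2}=V_{S\setminus\{r\}}^{2}\prod_{s\in S\setminus\{r\}}(x_r-x_s)^{2}$, the factorization of $\mathcal F'(x_r)$, and $u_r\mathcal F'(x_r)=\psi_r$ yields $P(x_r)=\psi_r Q(x_r)$ termwise; the squaring indeed disposes of all Vandermonde sign issues, so the ordering bookkeeping you worry about is harmless. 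Two small points worth making explicit: what you actually prove is the linearized condition $P(x_r)=\psi_r Q(x_r)$, which gives the stated $\psi_s=P(x_s)/Q(x_s)$ only under the (generic) nonvanishing $Q(x_s)\ne 0$ implicitly assumed in the Pad\'e problem; and the formulas give $\deg P\le m$, $\deg Q\le n$, with equality only generically, which is consistent with how the proposition is used (the normalization issue being handled separately in Remark~\ref{rem:Pade_CN}).
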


\begin{Proposition}[\cite{Ikawa13, Nagao15, NY16}]\label{prop:Pade_Ps_qJacobi}
In the $q$-grid case of the problem \eqref{eq:Pade_Ps_pade} $($i.e., interpolation points $x_s=q^s)$, the formula \eqref{eq:Pade_Ps_Jacobi} takes the following form:
\begin{gather}
P(x)=\frac{\mathcal{F}(x)}{(q)_{m+n}^{n+1}}\det\left[\sum^{m+n} _{s=0}\psi_s\frac{(q^{-(m+n)})_s }{(q)_s}\frac{q^{s(i+j+1)}}{x-q^s }\right]^n _{i,j=0},\nonumber\\
Q(x)=\frac{1}{(q)_{m+n}^{n}}\det\left[\sum^{m+n} _{s=0}\psi_s\frac{(q^{-(m+n)})_s }{(q)_s }q^{s(i+j+1)}(x-q^s)\right]^{n-1} _{i,j=0}.
\label{eq:Pade_Ps_qJacobi}
\end{gather}
\end{Proposition}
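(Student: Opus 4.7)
The proof is a direct specialization of Proposition~\ref{prop:Pade_Ps_Jacobi} to the $q$-grid $x_s=q^s$, so my plan is to start from the formulas \eqref{eq:Pade_Ps_Jacobi} and rewrite $u_s=\psi_s/\mathcal{F}'(x_s)$ in the $q$-Pochhammer form that appears on the right-hand side of \eqref{eq:Pade_Ps_qJacobi}.

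\textbf{Step 1 (explicit form of $\mathcal{F}'(q^s)$).} First I would split the product
\begin{gather*}
\mathcal{F}'(q^s)=\prod_{\substack{i=0\\ i\neq s}}^{m+n}(q^s-q^i)
=\prod_{i=0}^{s-1}q^i\big(q^{s-i}-1\big)\,\prod_{i=s+1}^{m+n}q^s\big(1-q^{i-s}\big)
\end{gather*}
and evaluate the two geometric sums in the exponents of $q$. Collecting powers of $q$ and the sign factors $(-1)^s$ coming from the first product, and reindexing, one obtains
\begin{gather*}
\mathcal{F}'(q^s)=(-1)^s\,q^{s(m+n)-s(s+1)/2}\,(q)_s\,(q)_{m+n-s},
\end{gather*}
where $(q)_k=(q;q)_k=\prod_{j=1}^{k}(1-q^j)$ is the usual finite $q$-Pochhammer symbol from the definition in the introduction.

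\textbf{Step 2 (inversion of $(q)_{m+n-s}$).} The next ingredient is to re-express $1/(q)_{m+n-s}$ using the ``base-inverted'' Pochhammer $(q^{-(m+n)})_s$. Writing $1-q^{k-(m+n)}=-q^{-(m+n-k)}\bigl(1-q^{m+n-k}\bigr)$ factor by factor for $k=0,\dots,s-1$, one gets
\begin{gather*}
(q^{-(m+n)})_s=(-1)^s\,q^{-s(m+n)+s(s-1)/2}\,\frac{(q)_{m+n}}{(q)_{m+n-s}},
\end{gather*}
and hence $1/(q)_{m+n-s}=(q^{-(m+n)})_s\,(-1)^s q^{s(m+n)-s(s-1)/2}/(q)_{m+n}$.

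\textbf{Step 3 (combining).} Plugging Steps 1 and~2 into $u_s=\psi_s/\mathcal{F}'(q^s)$, the signs $(-1)^s$ cancel, the exponents of $q$ telescope to the single term $q^s$, and I obtain the clean identity
\begin{gather*}
u_s=\frac{\psi_s}{(q)_{m+n}}\,\frac{(q^{-(m+n)})_s}{(q)_s}\,q^s.
\end{gather*}
Substituting this into the $P$-formula of \eqref{eq:Pade_Ps_Jacobi} absorbs the extra $q^s$ into $q^{s(i+j)}$ to give $q^{s(i+j+1)}$, and factoring the constant $1/(q)_{m+n}$ out of each of the $n+1$ rows of the determinant produces the prefactor $1/(q)_{m+n}^{n+1}$ in \eqref{eq:Pade_Ps_qJacobi}. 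The same manipulation applied to $Q(x)$ yields the prefactor $1/(q)_{m+n}^n$, since there are $n$ rows.

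\textbf{Anticipated difficulty.} There is no conceptual obstacle; the whole argument is a bookkeeping exercise. The only place where one can easily slip is in the exponent computation for $\mathcal{F}'(q^s)$ and in the sign/normalization accounting of Step~2, so I would carry those two computations out carefully (and cross-check by testing small cases such as $s=0,1$ and $s=m+n$) before assembling the final determinant identity.
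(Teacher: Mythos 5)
Your proposal is correct and follows essentially the same route as the paper: the paper's proof simply substitutes $\mathcal{F}(x)=\prod_{s=0}^{m+n}(x-q^s)$ and the evaluation $\mathcal{F}'(x_s)=\frac{(q)_s(q)_{m+n}}{q^{s}(q^{-(m+n)})_s}$ into \eqref{eq:Pade_Ps_Jacobi}, and your Steps 1--3 are exactly the (correct) bookkeeping that establishes this evaluation before absorbing $q^s$ into $q^{s(i+j+1)}$ and pulling $1/(q)_{m+n}$ out of the rows.
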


\begin{proof}
Substituting the expressions
\begin{gather}\label{eq:Pade_Ps_fprime}
\mathcal{F}(x)=\prod_{s=0}^{m+n}\big(x-q^s \big), \qquad \mathcal{F}^{\prime}(x_s )
=\frac{(q)_s (q)_{m+n}}{q^{s}(q^{-(m+n)})_s},
\end{gather}
into the formula (\ref{eq:Pade_Ps_Jacobi}), then we obtain the desired form (\ref{eq:Pade_Ps_qJacobi}).
\end{proof}

\begin{Remark}\label{rem:Pade_CN}
The normalization of the polynomials $P(x)$ and $Q(x)$ expressed in the formulas~(\ref{eq:Pade_Ps_Jacobi}) and~(\ref{eq:Pade_Ps_qJacobi}) dif\/fer from the convention $P(0)=1$ as f\/ixed beneath the interpolation condition (\ref{eq:Pade_Lax_pade}). This dif\/ference does not inf\/luence the result in Proposition~\ref{prop:Pade_Ps_fg}, because the common normalization factors of $P$ and $Q$ cancel in (\ref{eq:Pade_Ps_f}) and (\ref{eq:Pade_Ps_g}).
\end{Remark}

\begin{Proposition}\label{prop:Pade_Ps_PQ}
The polynomials $P(x)$ and $Q(x)$ defined in Section~{\rm \ref{subsec:Pade_Lax}} have the following particular values:
\begin{gather}
P\left(\frac{1}{a_s}\right)=\frac{(a_s)_{m+n+1}}{a_s^m (a_s)_1^{n+1}(q)_{m+n}^{n+1}}T_{a_s}(\tau_{m,n}),\qquad Q\left(\frac{q}{a_s}\right)=\frac{q^n \big(\frac{a_s}{q}\big)_1^n}{a_s^n (q)_{m+n}^n}T_{a_s}^{-1}(\tau_{m+1,n-1}),\nonumber\\
P\left(\frac{q}{b_s}\right)=\frac{q^m \big(\frac{b_s}{q}\big)_{m+n+1}}{b_s^m \big(\frac{b_s}{q}\big)_1^{n+1}(q)_{m+n}^{n+1}}T_{b_s}^{-1}(\tau_{m,n}),\qquad Q\left(\frac{1}{b_s}\right)=\frac{(b_s)_1^n}{b_s^n (q)_{m+n}^n}T_{b_s}(\tau_{m+1,n-1}),\label{eq:Pade_Ps_PQ}
\end{gather}
for $s=1,2, 3$. Here $\tau_{m,n}$ is defined by \eqref{eq:Ps_Gar_tau}.
\end{Proposition}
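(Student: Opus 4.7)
The strategy is to substitute the four special values $x = 1/a_s$, $q/a_s$, $q/b_s$, $1/b_s$ into the Jacobi--type determinantal formulas of Proposition~\ref{prop:Pade_Ps_qJacobi}, to rewrite each entry of the resulting matrix as a ${}_4\varphi_3$ with exactly one parameter shifted by $T_{a_s}^{\pm 1}$ or $T_{b_s}^{\pm 1}$, and hence to identify the determinant with the corresponding shift of the $\tau$--function \eqref{eq:Ps_Gar_tau}. The prefactors $\mathcal{F}(x)/(q)_{m+n}^{n+1}$ together with the scalars extracted in each shift identity then assemble into the claimed closed form.

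Concretely, the first step is the telescoping observation $\psi(q^t) = \prod_{i=1}^3 (b_i)_t/(a_i)_t$, which follows at once from the definition \eqref{eq:Pade_Lax_psi}. Combined with the factor $(q^{-(m+n)})_t/(q)_t$ appearing in \eqref{eq:Pade_Ps_qJacobi}, this gives
\begin{gather*}
\sum_{t=0}^{m+n} \psi_t \frac{(q^{-(m+n)})_t}{(q)_t}\, z^t = {}_4\varphi_3\left(\substack{b_1, b_2, b_3, q^{-(m+n)}\\ a_1, a_2, a_3}, z\right),
\end{gather*}
which is precisely the $(i,j)$--entry of $\tau_{m,n}$ at $z = q^{i+j+1}$. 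The second, and key, step is the pair of elementary shift identities
\begin{gather*}
\frac{(qc)_t}{(c)_t} = \frac{1 - c\, q^t}{1-c}, \qquad \frac{(q^{-1}c)_t}{(c)_t} = \frac{1 - q^{-1}c}{1 - c\, q^{t-1}},
\end{gather*}
valid for $c = a_s$ or $c = b_s$. When $x = 1/a_s$ is plugged into the $P$--formula, the factor $1/(x-q^t)$ generates $a_s/(1 - a_s q^t)$; by the first identity with $c = a_s$ this equals $a_s/(1-a_s)$ times the action of $T_{a_s}$ on $1/(a_s)_t$. Factoring $(a_s)_1^{-1} = (1-a_s)^{-1}$ out of each of the $n+1$ rows yields $T_{a_s}(\tau_{m,n})/(a_s)_1^{n+1}$, and multiplying by the prefactor $\mathcal{F}(1/a_s) = (a_s)_{m+n+1}/a_s^{m+n+1}$ together with the accumulated $a_s^{n+1}$ reproduces the first formula of \eqref{eq:Pade_Ps_PQ}. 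The three remaining cases are entirely parallel: $x = q/a_s$ in $Q$ produces $(1 - a_s q^{t-1})$, handled by the second identity with $c = a_s$, giving $T_{a_s}^{-1}(\tau_{m+1,n-1})$; $x = 1/b_s$ in $Q$ produces $(1 - b_s q^t)$, giving $T_{b_s}(\tau_{m+1,n-1})$; and $x = q/b_s$ in $P$ produces $1/(1 - b_s q^{t-1})$, giving $T_{b_s}^{-1}(\tau_{m,n})$. The matching of matrix sizes is automatic, since $\tau_{m+1,n-1}$ has the same hypergeometric argument $q^{-(m+n)}$ as $\tau_{m,n}$ but size $n \times n$, exactly the size of the $Q$--determinant in \eqref{eq:Pade_Ps_qJacobi}.

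The main obstacle is not conceptual but bookkeeping: one must reconcile the prefactor $\mathcal{F}(x)/(q)_{m+n}^{n+1}$ (or $1/(q)_{m+n}^n$ for $Q$) with the row--wise scalars $(1-c)^{\pm(n+1)}$ or $(1-q^{-1}c)^{\pm n}$ pulled out during the shift identification, the powers of $q$ and $c$ arising from the linear factor $x - q^t$ in $Q$ or from $\mathcal{F}(x)$ in $P$, and the negatively shifted symbol $(c/q)_1$ that appears in the $T^{-1}$ cases. Verifying that these combine into the four ratios in \eqref{eq:Pade_Ps_PQ} is routine once the telescoping of $\psi$ and the four shift identities above are in hand; no deeper structural input is required.
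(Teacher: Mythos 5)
Your proposal is correct and follows essentially the same route as the paper, whose proof simply evaluates the $q$-grid Jacobi formulas \eqref{eq:Pade_Ps_qJacobi} with $\psi_s=\prod_{i=1}^3 (b_i)_s/(a_i)_s$ at the four special points and identifies the resulting determinants with the shifted $\tau$-functions. Your shift identities and prefactor bookkeeping check out (e.g.\ $\mathcal{F}(1/a_s)=(a_s)_{m+n+1}/a_s^{m+n+1}$ combined with the row factors $a_s/(a_s)_1$ indeed gives the stated coefficient), so you have merely written out in detail what the paper leaves implicit.
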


\begin{proof}
This proof follows from the formula (\ref{eq:Pade_Ps_qJacobi}) and the sequence $\psi_s=\psi(q^s)=\prod\limits_{i=1}^3\frac{(b_i)_s}{(a_i)_s}$.
\end{proof}

\begin{Proposition}\label{prop:Pade_Ps_fg}
The particular values of $f$ and $g$ determined by \eqref{eq:Ps_E7T1_fg} give particular solutions of the system \eqref{eq:Ps_E7T1_ev}.
\end{Proposition}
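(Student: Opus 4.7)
The plan is first to extract $f$ and $g$ explicitly from the Pad\'e construction of Section~\ref{subsec:Pade_Lax}, and then to substitute the evaluations of $P$ and $Q$ given in Proposition~\ref{prop:Pade_Ps_PQ}. By Proposition~\ref{prop:Pade_Lax_ev}, the variables $f, g$ coming from the Pad\'e problem already solve~\eqref{eq:Ps_E7T1_ev}, so it suffices to verify that they coincide with the formulas~\eqref{eq:Ps_E7T1_fg}.

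To extract $f$, I would use the factorization of $R_1(x)$ in~\eqref{eq:Pade_Lax_DD}. Since $A(1/a_s) = 0$ and $B(1/b_s) = 0$, evaluating
\begin{gather*}
R_1(x) = B(x)P(x)Q(qx) - A(x)P(qx)Q(x)
\end{gather*}
at $x = 1/a_s$ and at $x = 1/b_s$ reduces the right-hand side to a single term on each. Matching with the factorized form yields
\begin{gather*}
1 - f/a_s \propto P(1/a_s)\,Q(q/a_s), \qquad 1 - f/b_s \propto P(q/b_s)\,Q(1/b_s),
\end{gather*}
with a common overall factor $c_0$. This $c_0$ cancels in the ratio $(1-f/a_i)/(1-f/b_j)$, and substituting the tau-determinant expressions from Proposition~\ref{prop:Pade_Ps_PQ} then produces exactly the first line of~\eqref{eq:Ps_E7T1_fg}.

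For $g$ the analogous strategy applies to $R_3$, with $R_2$ serving as a normalizer. Because $A_1(1/a_s) = 0$ for $s = 2, 3$ and $B(1/b_s) = 0$, specializing
\begin{gather*}
R_3(x) = (a_1, b_1 x)_1 A_1(x) P(qx) \o{Q}(x) - (b_1)_1 B(x) \o{P}(x) Q(qx)
\end{gather*}
at $x = 1/a_s$ and $x = 1/b_s$ isolates $\varphi(1/a_s) = 1 + g/a_s + \kappa/a_s^2$ in terms of $\o{P}(1/a_s)\,Q(q/a_s)$ and $\varphi(1/b_s)$ in terms of $P(q/b_s)\,\o{Q}(1/b_s)$, each up to the unknown constant $c_0'$. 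This $c_0'$ is fixed by evaluating $R_2(x) = (a_1, b_1 x)_1 P(x) \o{Q}(x) - (a_1 x, b_1)_1 \o{P}(x) Q(x)$ at $x = 1/a_1$, which gives $c_0' \propto P(1/a_1)\,\o{Q}(1/a_1)$. The shifted evaluations $\o{P}(1/a_s)$ and $\o{Q}(1/a_1), \o{Q}(1/b_s)$ are obtained by applying $T_1 = T_{a_1} T_{b_1}$ directly to Proposition~\ref{prop:Pade_Ps_PQ}: since $T_1$ does not touch $a_s, b_s$ for $s \neq 1$, the tau factors become $T_{a_s}(\o{\tau}_{m,n})$ and $T_{b_s}(\o{\tau}_{m+1,n-1})$, while reapplying the $Q(q/a_s)$-formula in the shifted parameters yields $\o{Q}(1/a_1) \propto T_{a_1}^{-1}(\o{\tau}_{m+1,n-1})$. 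Assembling the factors reproduces the second and third lines of~\eqref{eq:Ps_E7T1_fg}.

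The main obstacle is computational rather than conceptual: one must carefully track the accumulated parameter-dependent prefactors (powers of $a_s, b_s, q$, the $q$-Pochhammer symbols $(a_s)_1^{n+1}$ and $(a_s/q)_1^n$, together with the gauge constants $c_0, c_0'$ and their $T_1$-shifts) and check that they collapse to precisely the constants $\alpha', \beta', \gamma'$ stated in~\eqref{eq:Ps_E7T1_fg}.
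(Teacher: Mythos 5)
Your proposal is correct and follows essentially the same route as the paper: Proposition~\ref{prop:Pade_Lax_ev} guarantees $f,g$ solve \eqref{eq:Ps_E7T1_ev}, and the explicit formulas \eqref{eq:Ps_E7T1_fg} are obtained by evaluating the Casorati-determinant identities \eqref{eq:Pade_Lax_DD} (i.e., $R_1$, $R_2$, $R_3$ at the points $x=1/a_s$, $1/b_s$, with $R_2$ at $x=1/a_1$, $1/b_1$ fixing the constant) and then substituting the particular values of Proposition~\ref{prop:Pade_Ps_PQ}. The only detail left implicit, as you note, is the bookkeeping of prefactors yielding $\alpha'$, $\beta'$, $\gamma'$, which matches the paper's computation.
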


\begin{proof}
From the f\/irst equation of (\ref{eq:Pade_Lax_DD}), we have
\begin{gather}\label{eq:Pade_Ps_f}
\frac{1-\frac{f}{a_i}}{1-\frac{f}{b_j}}=-\frac{a_i}{b_j}\prod_{s=0}^{m+n-1}\frac{\big(\frac{1}{b_j q^s}\big)_1}{\big(\frac{1}{a_i q^s}\big)_1}\frac{B\big(\frac{1}{a_i}\big)}{A\big(\frac{1}{b_j}\big)}\frac{P\big(\frac{1}{a_i}\big)Q\big(\frac{q}{a_i}\big)}
{P\big(\frac{q}{b_j}\big)Q\big(\frac{1}{b_j}\big)}, \qquad i, j=1,2,3,
\end{gather}
where $A$ and $B$ are as in Appendix~\ref{subsec:Pade_Lax}. From the second and third equations of (\ref{eq:Pade_Lax_DD}), we have
\begin{gather}
 1+\frac{g}{a_i}+\frac{\kappa}{a_i^2}=-\frac{\prod\limits_{s=0}^{m+n}\big(\frac{1}{a_1 q^s}\big)_1}{\prod\limits_{s=0}^{m+n-1}\big(\frac{1}{a_i q^s}\big)_1}\frac{(b_1)_1 B_1\big(\frac{1}{a_i}\big)}{\big(a_1 ,\frac{b_1}{a_1}\big)_1}\frac{\o{P}\big(\frac{1}{a_i}\big)Q\big(\frac{q}{a_i}\big)}
{P\big(\frac{1}{a_1}\big)\o{Q}\big(\frac{1}{a_1}\big)}, \qquad i=2, 3, \nonumber\\
 1+\frac{g}{b_i}+\frac{\kappa}{b_i^2}=-\frac{\prod\limits_{s=0}^{m+n}\big(\frac{1}{b_1 q^s}\big)_1}{\prod\limits_{s=0}^{m+n-1}\big(\frac{1}{b_i q^s}\big)_1}\frac{(a_1)_1A_1\big(\frac{1}{b_i}\big)}{\big(\frac{a_1}{b_1},b_1\big)_1}\frac{P\big(\frac{q}{b_i}\big)\o{Q}\big(\frac{1}{b_i}\big)}
{\o{P}\big(\frac{1}{b_1}\big)Q\big(\frac{1}{b_1}\big)} ,\qquad i=2, 3,\label{eq:Pade_Ps_g}
\end{gather}
where $A_1$ and $B_1$ are as in Appendix~\ref{subsec:Pade_Lax}. Substituting the particular values (\ref{eq:Pade_Ps_PQ}) into the expressions (\ref{eq:Pade_Ps_f}) and (\ref{eq:Pade_Ps_g}) respectively, we obtain the desired particular solutions~(\ref{eq:Ps_E7T1_fg}).
\end{proof}

\subsection*{Acknowledgements}

The author shall be thankful to Professor Yasuhiko Yamada for valuable discussions. The author is also grateful to the referees for stimulating comments. This work was partially supported by Expenses Revitalizing Education and Research of Akashi College (0217030).

\pdfbookmark[1]{References}{ref}
\LastPageEnding

\end{document}